\documentclass[a4paper,11pt]{article}

\usepackage[a4paper,margin=1in]{geometry}
\usepackage[UKenglish]{babel}
\usepackage[T1]{fontenc}
\usepackage[utf8]{inputenc}
\usepackage{lmodern}
\usepackage{microtype}
\usepackage{amsmath,amssymb,amsfonts,amsthm}
\usepackage{graphicx}
\usepackage{xcolor}
\usepackage{comment}
\usepackage{url}
\usepackage{hyperref}
\usepackage[nameinlink,noabbrev]{cleveref}
\usepackage{physics}

\input{commands.tex}

\graphicspath{{./figures/}{./}}

\theoremstyle{plain}
\newtheorem{theorem}{Theorem}[section]
\newtheorem{lemma}[theorem]{Lemma}
\newtheorem{corollary}[theorem]{Corollary}

\newcommand{\paperkeywords}{}
\newcommand{\keywords}[1]{\renewcommand{\paperkeywords}{#1}}
\newcommand{\printkeywords}{%
  \par\smallskip\noindent\textbf{Keywords: }\paperkeywords\par
}

\title{Consensus Time in 3-Majority and 2-Choices Is Determined by the Maximum Initial Opinion Density}

\author{Niccol\`o D'Archivio\\
Université Côte d'Azur, Inria, CNRS, i3s, Sophia Antipolis, France\\
\texttt{niccolo.darchivio@inria.fr}\\
\url{https://sites.google.com/view/niccolo-darchivio/home}\\
\href{https://orcid.org/0009-0005-9491-2928}{ORCID: 0009-0005-9491-2928}}

\date{}

\keywords{3-Majority dynamics, 2-Choices dynamics, opinion dynamics, consensus,
  distributed algorithms, convergence time, infinity norm}

\begin{document}

\maketitle

\begin{abstract}
  We establish the correct parameter governing the convergence time of the 3-Majority and 2-Choices dynamics on the complete graph in the synchronous model.
  Recent work [Shimizu and Shiraga, PODC'25] provides matching upper and lower bounds on the number of rounds to consensus, but only in a weak sense: the bounds are shown to coincide for some initial opinion configuration.

  In contrast, we obtain tight bounds in a strong sense, with upper and lower bounds matching up to logarithmic factors for every initial configuration.
  Let $\alpha^{(0)}$ be the initial opinion-frequency vector, and denote by
  $\|\alpha^{(0)}\|_\infty$ its maximum entry.  
  We show that 3-Majority reaches consensus in
  $\tilde{\Theta}(\min\{\|\alpha^{(0)}\|_\infty^{-1},\sqrt n\})$ rounds w.h.p., while 2-Choices reaches consensus in
  $\tilde{\Theta}(\|\alpha^{(0)}\|_\infty^{-1})$ rounds w.h.p.

  Our results demonstrate that the convergence time of both dynamics is governed not by global parameters such as the number of opinions $k$ or the squared $\ell_2$ norm of the initial opinion distribution, but rather by the ``local'' parameter $\|\alpha^{(0)}\|_\infty$, the maximum initial opinion density.
\end{abstract}

\printkeywords

\section{Introduction}
\label{sec:intro}
We study two well-known opinion dynamics: 3-Majority and 2-Choices. 
These dynamics have received considerable attention in the distributed computing community, as they are simple yet powerful protocols for the consensus problem and its variants \cite{overview2020}.
They are defined on a distributed system of $n$ agents, each holding an opinion in $[k]$, where $2\le k\le n$. The agents interact in synchronous rounds on the complete graph with self-loops.  In each round, every agent samples a few other agents and updates its opinion according to a simple rule.
In 3-Majority, each agent samples three agents and adopts the majority opinion among them, breaking ties uniformly at random.  In 2-Choices, each agent samples two agents and adopts their opinion if they coincide; otherwise, it keeps its current opinion.  The goal is to reach consensus, i.e., a configuration where all agents hold the same opinion.  We are interested in the number of rounds needed to reach consensus with high probability, as a function of the initial opinion configuration.

To describe the state of the system at round $t$, we use the opinion-frequency vector $\alphavec^{(t)}=(\alpha_t(i))_{i\in[k]}$, where $\alpha_t(i)$ is the fraction of agents holding opinion $i$ at round $t$. We refer to $\alphavec^{(t)}$ as the configuration at round $t$.
Two important parameters of a configuration are $\gamma_t=\|\alphavec^{(t)}\|_2^2 = \sum_{i\in[k]} \alpha_t(i)^2$ for the squared $\ell_2$ norm of the opinion distribution, and $m_t=\|\alphavec^{(t)}\|_\infty = \max_{i\in[k]} \alpha_t(i)$ for the maximum opinion density. 
The ratio between these two quantities servea as proxy for the ``balance'' of a configuration. See \Cref{fig:heatmap} for a visual representation of the ratio $\gamma_t/m_t$ for all possible configurations with three opinions.
While it always holds that $m_t \ge \gamma_t$, equality holds if and only if all nonzero opinion frequencies are equal. Thus, we call $\alphavec^{(t)}$ a \textit{balanced configuration} if $\gamma_t = \Theta(m_t)$. On the other hand, if there are only a few opinions with large density and many with small density, then $m_t$ is much larger than $\gamma_t$, and we call $\alphavec^{(t)}$ an unbalanced configuration.
Recent work by Shimizu and Shiraga~\cite{Shimizu2025} gives upper bounds on the convergence time of 3-Majority and 2-Choices in terms of $\gamma_0$. In particular, they show that 3-Majority reaches consensus in $\tilde{O}(\min\{\gamma_0^{-1},\sqrt n\})$ rounds \whp, while 2-Choices reaches consensus in $\tilde{O}(\gamma_0^{-1})$ rounds \whp.  
They also show that there exists an initial configuration for which these bounds are tight, up to logarithmic factors. Thus, their result leaves open the question of whether $\gamma_0$ is the correct parameter governing the convergence time of these dynamics.

\begin{figure}[htbp]
  \centering
  \def\svgwidth{0.6\columnwidth} 
  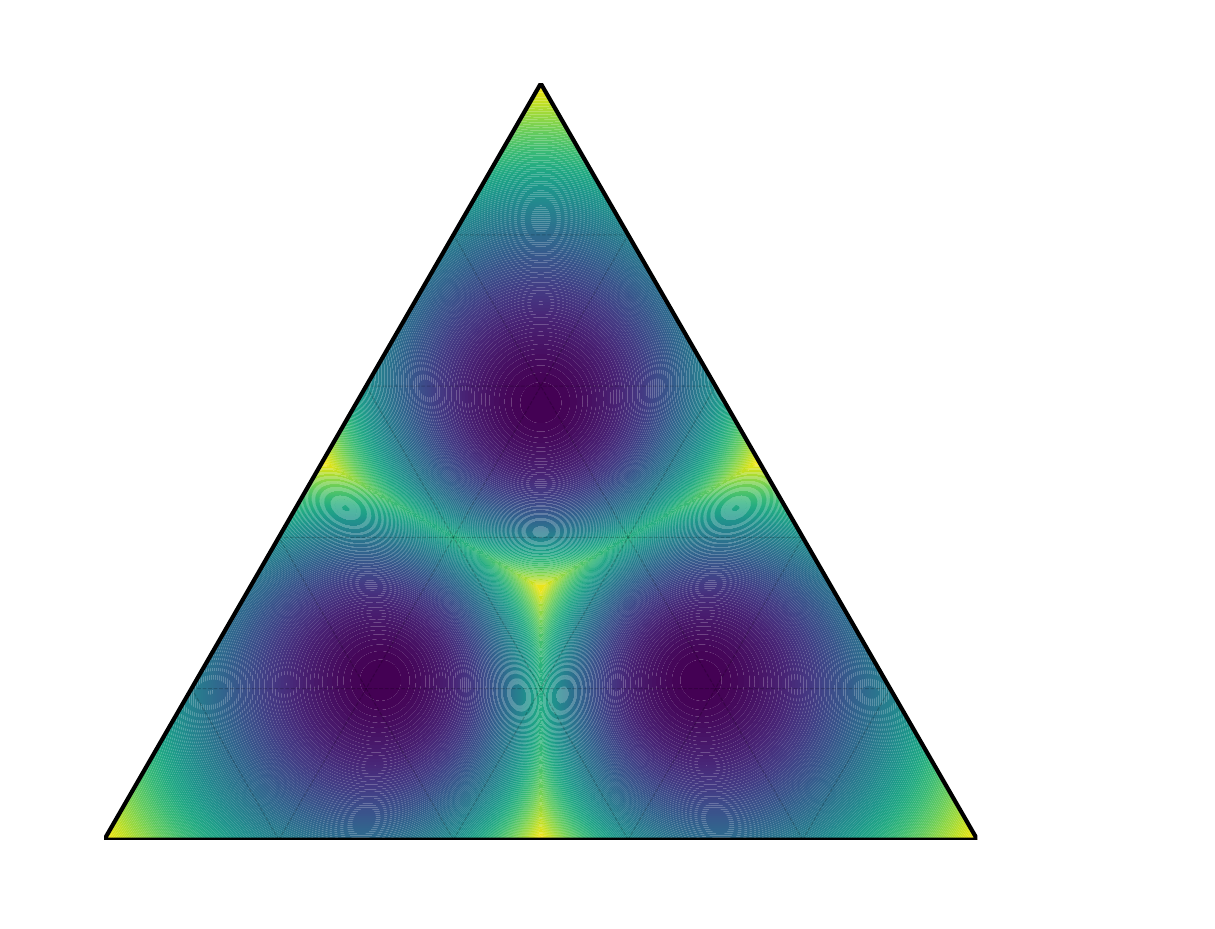
  \caption{A heatmap of the ratio $\gamma_0/m_0$ for all possible initial configurations with three opinions. The corners of the triangle correspond to monochromatic configurations, where $m_0 = \gamma_0 = 1$. The centre of the triangle corresponds to the perfectly balanced configuration, where $m_0 = \gamma_0 = 1/3$. The lower $\gamma_0/m_0$ the more unbalanced the configuration is.}
  \label{fig:heatmap}
\end{figure}

\subsection{Our Contribution}
We show that the correct parameter governing the convergence time of both dynamics is $m_0$, the maximum initial opinion density, rather than $\gamma_0$.  
In particular, we show that 3-Majority reaches consensus in $\tilde{\Theta}(\min\{m_0^{-1},\sqrt n\})$ rounds \whp, while 2-Choices reaches consensus in $\tilde{\Theta}(m_0^{-1})$ rounds \whp. More precisely, we prove the following result.

\begin{theorem}[Main result, informal]
\label{thm:main-informal}
  Consider a configuration with opinion-frequency vector $\alphavec^{(0)}$, and let $m_0=\|\alphavec^{(0)}\|_\infty$ be the maximum initial opinion density.
  If the network is the complete graph with self-loops and agents activate synchronously, then 3-Majority reaches
  consensus in
  \(
    \tilde{\Theta}\!\left(\min\{m_0^{-1},\sqrt n\}\right)
  \)
  rounds with high probability, while 2-Choices reaches consensus in
  $\tilde{\Theta}(m_0^{-1})$ rounds with high probability. See \Cref{thm:main,thm:main-lower} for the formal statement with the precise logarithmic factors.
\end{theorem}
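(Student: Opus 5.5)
The theorem has two directions, and I would prove each with its own argument. The upper bound reduces to the Shimizu--Shiraga result after a short burn-in phase. The lower bound rests on a drift-plus-concentration argument for $m_t$.

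\emph{Upper bound.} I would take the bound $\tilde O(\min\{\gamma_0^{-1},\sqrt n\})$ for 3-Majority and $\tilde O(\gamma_0^{-1})$ for 2-Choices~\cite{Shimizu2025} as a black box that can be restarted from any round. It then suffices to show that within $\tilde O(m_0^{-1})$ rounds (capped at $\tilde O(\sqrt n)$ for 3-Majority) the configuration reaches a round $t$ with $\gamma_t=\tilde\Omega(m_0)$. Since $m_t\ge\gamma_t$ always, this is a claim that the configuration becomes ``balanced relative to $m_0$''.

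The mechanism is that small opinions die out. In 3-Majority, $\mathbb E[\alpha_{t+1}(i)]=\alpha_t(i)(1+\alpha_t(i)-\gamma_t)$, and in 2-Choices $\mathbb E[\alpha_{t+1}(i)]=\alpha_t(i)+\alpha_t(i)(\alpha_t(i)-\gamma_t)$. So opinions with $\alpha(i)\ll\gamma_t$ shrink at a multiplicative rate of about $1-\gamma_t$ per round.

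I would track the total mass $S_t$ of opinions with density below $m_0/C$, with $C$ polylogarithmic. I would also track the largest opinion, showing by a supermartingale/Freedman argument that it stays above $m_0/2$ for $\tilde O(m_0^{-1})$ rounds. Two cases then arise. If $\gamma_t\ge m_0^2$ but $\gamma_t\ll m_0$, most of the $\ell_2$ mass sits in small opinions, so their total mass is close to $1$ while their squared mass is tiny. The large opinion $i^*$ then has multiplicative drift $1+\Omega(m_0)$, so within $O(m_0^{-1}\log n)$ rounds it grows until $\gamma_t\ge m_t^2\ge$ const$\cdot m_0$ scale.

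Concretely, I would use the potential $\alpha_t(i^*)$, whose relative drift is $\alpha_t(i^*)-\gamma_t\ge m_0/2-\gamma_t$. Either $\gamma_t\ge m_0/4$ already, or $i^*$ grows by a factor $1+m_0/4$ per round in expectation. The latter can persist for only $O(m_0^{-1}\log(1/m_0))$ rounds before $\alpha(i^*)$ is constant, which forces $\gamma\ge\Omega(1)$.

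The concentration for the growth of $i^*$ needs variance $O(\alpha/n)$ per round. This is fine when $m_0\gg\sqrt{\log n/n}$. In the remaining regime for 3-Majority, the $\sqrt n$ term of the black box applies directly, and for 2-Choices $m_0^{-1}\ge\sqrt n$, so the noise is dominated over $m_0^{-1}$ rounds.

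\emph{Lower bound.} I would show that w.h.p.\ $m_t\le 2m_0+\tilde O(\cdot)$ for $t\le c\,m_0^{-1}/\log n$. Consensus needs $m_t=1$, so this gives the bound.

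For each opinion $i$ the drift satisfies $\mathbb E[\alpha_{t+1}(i)\mid\cdot]\le\alpha_t(i)(1+\alpha_t(i))$. While $\alpha_t(i)\le 2m_0$ this is at most a factor $1+2m_0$ per round, i.e.\ a constant factor over $c/m_0$ rounds. The fluctuations per round have variance $O(\alpha_t(i)/n)$. By Freedman's inequality, a union bound over the $k\le n$ opinions controls all of them simultaneously.

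For 3-Majority with $m_0<n^{-1/2}$ the noise dominates. Here I would show that $\max_i\alpha_t(i)$ cannot exceed $O(\sqrt{t\cdot\max\alpha/n})$ plus drift, and solving the resulting recursion gives $\tilde\Omega(\sqrt n)$ rounds. For 2-Choices the variance is $O(\alpha^2+\alpha/n)$-type, which is smaller, so the $m_0^{-1}$ bound survives even when $m_0\ll n^{-1/2}$.

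\emph{Main obstacle.} I expect the hardest step to be the upper-bound regime in which $m_0$ is tiny and comparable to the noise scale. There the growth of the largest opinion is not drift-dominated, and the sequence of phases must be stitched together with only logarithmic losses.
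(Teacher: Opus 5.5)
Your upper bound follows the paper's argument. You track the plurality opinion, which grows by a factor $1+\Omega(m_0)$ per round in expectation while $\gamma_t$ stays below a small multiple of $m_0$. You concentrate this growth with Freedman and restart the Shimizu--Shiraga bound once $\gamma_t=\Omega(m_0)$. The $S_t$ bookkeeping is unnecessary. The paper's dyadic phases give $O(m_0^{-1})$ instead of your $O(m_0^{-1}\log(1/m_0))$, which does not matter here.

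Two points need to be stated precisely. First, for 2-Choices the concentration must use the variance $\alpha(\alpha+\gamma)/n\le 2\alpha^2/n$, valid while the configuration is unbalanced. With the $O(\alpha/n)$ bound you wrote, the regime $\mathrm{polylog}(n)/n\ll m_0\ll\sqrt{\log n/n}$ is not covered; with the correct bound the exponent is $\Omega(nm_0)$. Second, below $\mathrm{polylog}(n)/n$ for 2-Choices, and below $\sqrt{\log n/n}$ for 3-Majority, the black box alone gives the claimed bound, via $\gamma_0^{-1}\le n$ and via its $\sqrt n$ branch respectively. So the ``main obstacle'' you anticipate never has to be confronted.

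Your lower bound takes a genuinely different route. The paper imports Lemma~4.4 of Shimizu--Shiraga, a relative-growth bound whose failure probability $\exp(-\Omega(n\alpha_0(i)^2))$ or $\exp(-\Omega(n\alpha_0(i)))$ is useless for tiny opinions. It therefore needs the merging coupling to lift each candidate winner to mass about $a_\star$ before union-bounding.

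Your absolute threshold handles small opinions automatically. To make it precise, set $a=\max\{m_0,a_\star\}$, with $a_\star=A\sqrt{\log n/n}$ for 3-Majority and $A\log n/n$ for 2-Choices. Stop at the first time any opinion exceeds $2a$; this stopping time must be global over all opinions so that $\gamma_t\le m_t\le 2a$ before it. Up to that time, each $\alpha_t(i)/(1+2a)^t$ is a supermartingale. Its increment variance is at most $2a/n$ for 3-Majority and $8a^2/n$ for 2-Choices. Hence the probability that a given opinion reaches $2a$ within $c/a$ rounds is $\exp(-\Omega(na^2))$, resp.\ $\exp(-\Omega(na))$, however small $\alpha_0(i)$ is. A union bound over at most $n$ opinions then finishes the argument.

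This version is more self-contained than the paper's: it needs no coupling and no imported growth lemma. It also recovers the bounds of \Cref{thm:main-lower} without your extra $1/\log n$ in the horizon. You should phrase the noise-dominated regime this way rather than as a ``recursion'' for $\max_i\alpha_t(i)$. Also correct the 2-Choices variance there: it is $\alpha(\alpha+\gamma)/n$, not $O(\alpha^2+\alpha/n)$.
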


This result translates into a substantial improvement over the state-of-the-art upper bounds of Shimizu and Shiraga~\cite{Shimizu2025} for unbalanced initial configurations, where $m_0$ is much larger than $\gamma_0$.  For example, consider a configuration with $n$ opinions: one opinion has density $m_0=1/\sqrt{n}$ and the remaining $n-1$ opinions each have density $(1-m_0)/(n-1)$, so $\gamma_0 = m_0^2 + (1-m_0)^2/(n-1) \approx 1/n$. In this configuration, the convergence time of 2-Choices is $\tilde{\Theta}(\sqrt{n})$ by our result, while the bound of Shimizu and Shiraga would give only $\tilde{O}(n)$.
A similar gap appears for 3-Majority, leading to improvements up to a factor $n^{1/4}$ on consensus time.

Beyond convergence time improvements, our result sheds light on an interesting fact: the consensus time of these two dynamics does not depend on global parameters such as the number of opinions $k$ or the squared $\ell_2$ norm $\gamma_0$ of the initial opinion distribution, but rather on the local parameter $m_0$, the maximum initial opinion density. 
This is somewhat counterintuitive: if we fix the largest opinion density $m_0$, then the distribution of the remaining mass among the other opinions (which changes $\gamma_0$) does not affect the convergence time of the dynamics. At first glance, one could expect that distributing the remaining mass among a few competing opinions would make the convergence time longer, while distributing the same mass among more opinions would make it shorter, but this is not the case.

Another interesting consequence of our result is that we show for the first time a gap in the convergence time between 3-Majority and the Undecided-State dynamics.
The Undecided-State is a well-known opinion dynamics in which an agent that samples an opinion different from its own becomes undecided, and an undecided agent just copies the sampled opinion.
Becchetti et al.~\cite{Becchetti2016} proved that the Undecided-State reaches consensus in $O(\frac{\gamma_0}{m_0^2}\log n)$ rounds \whp, as long as the number of opinions $k=O((n\log n)^{1/3})$.
This means that 3-Majority can be slower than the Undecided-State by a factor of roughly $m_0/\gamma_0$, which becomes substantial for unbalanced configurations.
To the best of our knowledge, this surprising advantage of the Undecided-State over 3-Majority has passed unnoticed until now, as it exists only for unbalanced configurations, highlighting the importance of analysing the dynamics for every initial configuration. See \Cref{fig:comparison_conv_time} for a comparison of the convergence time of 3-Majority, 2-Choices, and the Undecided-State for different initial configurations.

An important variant of the consensus problem is the plurality consensus problem, where the goal is to reach consensus on the initial plurality opinion, i.e., the opinion with the largest initial density. We show that, for both 3-Majority and 2-Choices, plurality consensus is ensured by local conditions, namely $m_0$ and the bias between the largest and second-largest opinion densities. More precisely, we prove the following result.

\begin{theorem}[Plurality consensus, informal]
\label{thm:plurality-informal}
  Consider a configuration with opinion-frequency vector $\alphavec^{(0)}$ such that $\alpha_0(1) > \alpha_0(2) \ge \cdots \ge \alpha_0(k)$. Let $m_0=\alpha_0(1)$ be the maximum initial opinion density, and let $\bias_0 = \alpha_0(1) - \alpha_0(2)$ be the density bias between the largest and second-largest opinions.
  If the network is the complete graph with self-loops and agents activate synchronously, then under the conditions $m_0 = \omega(\frac{\log n}{\sqrt{n}})$ and $\bias_0 = \omega(\sqrt{\frac{\log n}{n}})$, 3-Majority reaches consensus on opinion 1 in $O(m_0^{-1} \log n)$ rounds with probability at least $1-n^{-2}$, while for 2-Choices, the same statement holds under the condition $m_0 = \omega(\frac{(\log n)^2}{n})$ and $\bias_0 = \omega(\sqrt{\frac{m_0 \log n}{n}})$.
\end{theorem}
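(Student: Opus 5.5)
We prove the plurality statement by tracking, round by round, the two quantities $\alpha_t(1)$ and the gap to every competitor, $\alpha_t(1)-\alpha_t(i)$. The starting point is the one-step expectations. For 3-Majority,
\[
\mathbb{E}[\alpha_{t+1}(i)\mid\alphavec^{(t)}]=\alpha_t(i)\bigl(1+\alpha_t(i)-\gamma_t\bigr).
\]
For 2-Choices,
\[
\mathbb{E}[\alpha_{t+1}(i)\mid\alphavec^{(t)}]=\alpha_t(i)\bigl(1+\alpha_t(i)-\gamma_t\bigr)
\]
as well, up to the form of the variance term, since an agent adopts $i$ with probability $\alpha_t(i)^2$ and leaves $i$ with probability $\alpha_t(i)(\gamma_t-\alpha_t(i)^2/\ldots)$. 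The same formula gives the competitors' drift, so
\[
\mathbb{E}[\alpha_{t+1}(1)-\alpha_{t+1}(i)]=(\alpha_t(1)-\alpha_t(i))\bigl(1+\alpha_t(1)+\alpha_t(i)-\gamma_t\bigr).
\]
Since $\gamma_t\le m_t=\alpha_t(1)$ as long as opinion 1 remains the maximum, this multiplicative factor is at least $1+\alpha_t(i)\ge 1$.

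\textbf{Phase 1: the bias grows geometrically.}
\begin{itemize}
\item Step (a). Use the drift identity above to show that the relative bias $\delta_t(i)=(\alpha_t(1)-\alpha_t(i))/\alpha_t(1)$ is a submartingale-like quantity. Its expected relative growth per round is about $(\alpha_t(1)-\alpha_t(i))\cdot(1-\alpha_t(i)/\alpha_t(1))$, which is $\gtrsim \bias_t$.
\item Step (b). Control the fluctuations. Each coordinate is a sum of $n$ independent indicators, so the difference has variance $O(m_t/n)$; for 3-Majority the bound is $O(1/n)$ in the worst case. Bernstein/Freedman concentration therefore shows that, as long as $\bias_t=\omega(\sqrt{m_t\log n/n})$ (resp.\ $\sqrt{\log n/n}$), the bias never shrinks by more than a $(1\pm o(1))$ factor per round.
\item Step (c). Conclude that $\alpha_t(1)$ does not decrease and the bias roughly doubles every $O(1/m_t)$ rounds. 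After $O(m_0^{-1}\log n)$ rounds we therefore reach $\alpha_t(1)\ge 2\alpha_t(i)$ for every $i\ne1$, with a union bound over the at most $n$ opinions.
\end{itemize}
The hypotheses $m_0=\omega(\log n/\sqrt n)$ for 3-Majority and $m_0=\omega(\log^2 n/n)$ for 2-Choices are exactly what make the deterministic drift dominate the noise along this whole trajectory.

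\textbf{Phase 2: opinion 1 takes over.} Once opinion 1 is a constant factor larger than every other opinion, its growth factor is $1+\alpha_t(1)-\gamma_t\ge 1+\Omega(\alpha_t(1))$, because $\gamma_t\le \max_{i\ne1}\alpha_t(i)+\alpha_t(1)^2$ is a constant fraction below $\alpha_t(1)$. Hence $\alpha_t(1)$ doubles every $O(1/\alpha_t(1))$ rounds. These phase lengths form a geometric series, giving $O(m_0^{-1}+\log n)$ rounds to reach a constant density. From constant density, a standard $O(\log n)$-round argument finishes: the minority mass $1-\alpha_t(1)$ contracts by a constant factor per round until it is $O(\log n/n)$, and it then disappears in $O(\log n)$ further rounds, or by a single final round for 2-Choices with small counts. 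The failure probabilities from the union bounds over $O(m_0^{-1}\log n)$ rounds and $n$ opinions are kept below $n^{-2}$ by choosing the constants in the concentration bounds appropriately.

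\textbf{Main obstacle.} The hardest step is Phase 1, Step (b) in the regime where $m_t$ is small. There the multiplicative gain per round is only about $1+m_t$, so the noise accumulated over the $\Theta(1/m_t)$ rounds of each doubling period must be bounded by a martingale argument rather than round by round. The plan is to apply Freedman's inequality to the process
\[
\bias_t\prod_{s<t}\bigl(1+\alpha_s(1)+\alpha_s(i)-\gamma_s\bigr)^{-1},
\]
with total quadratic variation $O(m_0/n)\cdot O(1/m_0)=O(1/n)$ per doubling period for 2-Choices and $O(1/(nm_0))$ for 3-Majority. This is exactly where the thresholds $\bias_0=\omega(\sqrt{m_0\log n/n})$ and $\bias_0=\omega(\sqrt{\log n/n})$ come from; the 3-Majority threshold also uses $m_0=\omega(\log n/\sqrt n)$. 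Any earlier lemma of the paper giving concentration of $\alpha_t(i)$ over $O(1/m_0)$ rounds would be invoked at this point.
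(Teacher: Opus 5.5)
Your route differs from the paper's. The paper never tracks the process to consensus itself. It runs the normalized-bias martingale $\delta_{t\wedge\sigma}(1,j)/G_t$ only up to the balancing time $\tau_{\mathrm b}=O(m_0^{-1})$ of \Cref{lem:quick-balance}. There $\gamma_t<m_t/2$ forces $a_t=\alpha_t(1)+\alpha_t(j)-\gamma_t>m_t/2$ for every competitor $j$ at once, so one telescoping bound covers all $j$. It then hands the balanced configuration, which has $\gamma=\Omega(m_0)$, to the Shimizu--Shiraga plurality theorem as a black box. For 2-Choices it also tracks the scaled bias $\eta$, using the auxiliary martingale $\alpha_t(1)/H_t$ and the stopping time $\tau_A$.

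You instead give a self-contained three-phase argument that never needs $\gamma$ to be large. This removes the dependence on the external theorem and shows that only $\alpha(1)$ and its close competitors matter. The cost is extra work:
\begin{itemize}
\item The window has length $O(m_0^{-1}\log n)$ rather than $O(m_0^{-1})$, so you need a lower-tail bound on $\alpha_t(1)$ over this longer window.
\item There are balanced stretches ($\gamma_t\approx m_t$) in which the drift factor of $\delta_t(1,j)$ is only about $1+\alpha_t(j)$. The telescoping then works only for competitors with $\alpha_t(j)\ge\alpha_t(1)/2$. Small competitors need a separate argument, e.g.\ the ratio $\alpha_t(j)/\alpha_t(1)$ has drift factor $\frac{1+\alpha_t(j)-\gamma_t}{1+\alpha_t(1)-\gamma_t}\le 1$, and its accumulated noise over the window is $o(1)$ exactly under $m_0=\omega(\log n/\sqrt n)$, resp.\ $m_0=\omega(\log^2 n/n)$.
\item You must handle the endgame yourself.
\end{itemize}
The outline can be completed along these lines.

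Two computations in your sketch are wrong as written.
\begin{itemize}
\item In Step (a), the relative bias $\rho_t=\delta_t(1,i)/\alpha_t(1)$ increases in expectation by about $\delta_t(1,i)\,\alpha_t(i)/\alpha_t(1)=\alpha_t(i)\rho_t$, i.e.\ by a factor of about $1+\alpha_t(i)$. Your expression $\delta_t(1,i)(1-\alpha_t(i)/\alpha_t(1))=\alpha_t(1)\rho_t^2$ is negligible precisely when $\rho_t$ is small, and would not give $O(m_0^{-1}\log n)$ rounds.
\item In the ``main obstacle'', your quadratic-variation figures are off by a factor $m_0$. By \Cref{lem:basic inequality}, the one-round variance of $\delta_t(1,j)$ is $O(m_t/n)$ for 3-Majority and $O(m_t^2/n)$ for 2-Choices. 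A doubling period of $O(1/m_t)$ rounds therefore contributes $O(1/n)$ and $O(m_t/n)$ respectively, not $O(1/(nm_0))$ and $O(1/n)$. Your figures would give the thresholds $\sqrt{\log n/(nm_0)}$ and $\sqrt{\log n/n}$, not those in the statement.
\end{itemize}
With the correct figures, bound the normalized variance via $\sum_t a_t/G_{t+1}=\sum_t(1/G_t-1/G_{t+1})\le 1$. Use $\alpha_t(1)\le 2a_t$ for close competitors and, for 2-Choices, $\alpha_t(1)\le 2m_0G_t$ from a $\tau_A$-type upper-tail bound. The paper can afford a uniform per-round bound times the window length for 2-Choices because its window is only $O(m_0^{-1})$. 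Over your window that shortcut would cost a $\sqrt{\log n}$ factor in the bias threshold.
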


This result relaxes the state-of-the-art conditions for plurality consensus of Shimizu and Shiraga~\cite{Shimizu2025}, which require $\gamma_0 = \omega(\frac{\log n}{\sqrt{n}})$ for 3-Majority, and $\gamma_0 = \omega(\frac{(\log n)^2}{n})$ for 2-Choices. Again, the improvement is substantial for unbalanced configurations, and the gap between our conditions and those of Shimizu and Shiraga can be as large as a factor of $n^{1/4}$ for 3-Majority and $n^{1/2}$ for 2-Choices.

\begin{figure}[htbp]
  \centering
  \def\svgwidth{0.9\columnwidth} 
  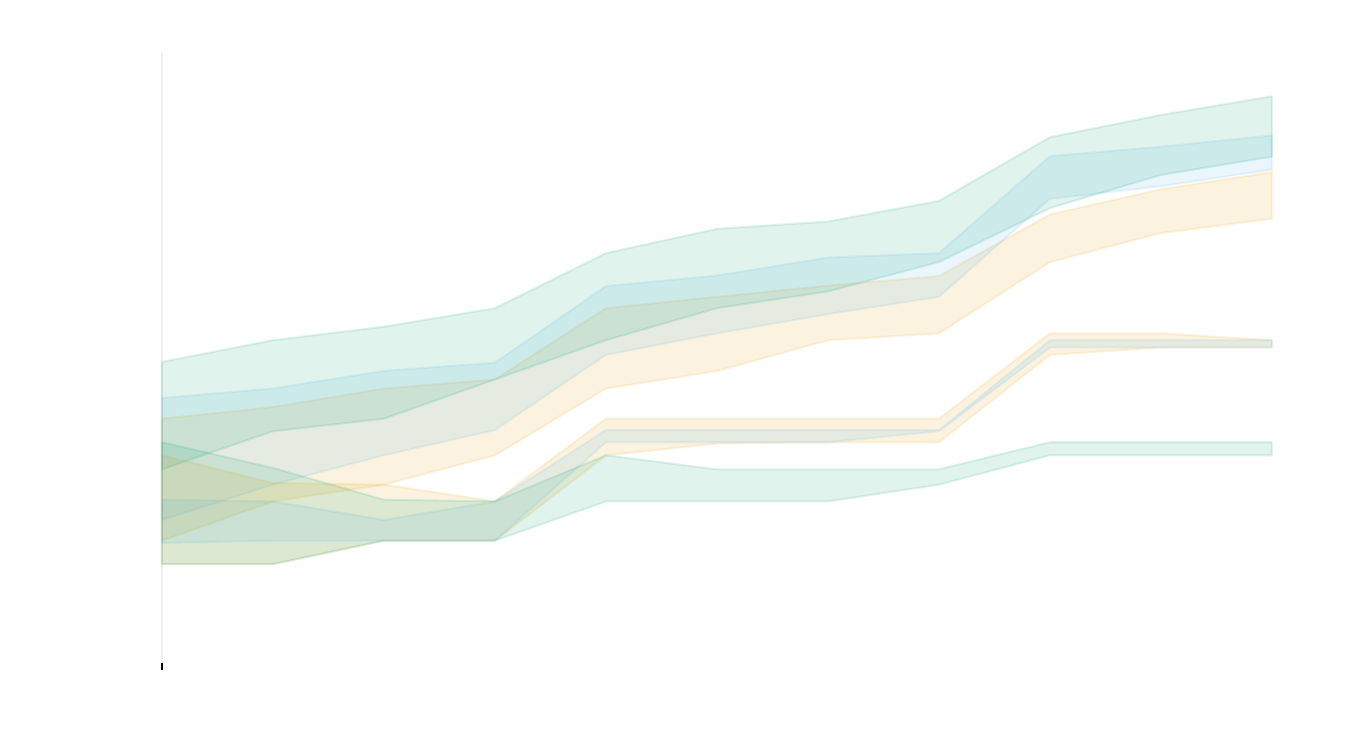
  \caption{Comparison of consensus time between 3-Majority, 2-Choices, and Undecided-State over 100 simulations. We fix the maximum opinion density $m_0 \approx n^{-1/4}$, and we compare two different distributions of the remaining mass among the other opinions. In the balanced case, all opinions have mass $m_0$, while in the unbalanced case we set all the remaining opinion density to $1/n$.}
  \label{fig:comparison_conv_time}
\end{figure}


\section{Related Work}\label{sec:related_work}
There is a vast body of research investigating the convergence time of simple dynamics such as \threemaj, \twochoices, and the \undecided, both in the synchronous and in the asynchronous settings.
We first focus on the works on the synchronous setting.

\textbf{3-Majority.}
Let us specify that all the results we mention in this subsection and the next two subsections hold on the complete graph of \(n\) nodes with self-loops.
The work \cite{BecchettiCNPT16} proved an upper bound of \(O\pa{(k^2 \sqrt{\log n} + k \log n)(k + \log n)}\) rounds to reach consensus that holds w.h.p., provided that \(k \le n^{\alpha}\) for a suitable positive constant \(\alpha < 1\).

In \cite{becchettiSimpleDynamicsPlurality2017}, the authors showed that the synchronous \threemaj dynamics with \(k\) opinions converges in \(O(\min\{k, (n/\log n)^{\frac 13}\} \log n)\) with high probability, provided that the bias of the initial configuration is at least \(c \sqrt{\min\{2k, (n/\log n)^{\frac 13}\} n \log n}\) for some constant \(c > 0\).
Moreover, the authors provided a lower bound of \(\Omega(k\log n)\) on the convergence time to consensus, w.h.p., whenever the initial configuration is sufficiently balanced, that is, \(\max_{i \in [k]}\{\conf_0(i)\} \le n/k + (n/k)^{1 - \varepsilon}\) for some \(\varepsilon > 0\) and \(k \le (n/\log n)^{1/4}\).

\textbf{2-Choices.}
The work \cite{BerenbrinkCEKMN17} compared the synchronous \threemaj dynamics with the synchronous \twochoices dynamics.
The authors first proved a generic lower bound of \(\Omega(\min\{k, n/\log n\})\) rounds to reach consensus starting from the initial perfectly balanced configuration, w.h.p.
Furthermore, they proved that the \threemaj dynamics works better in symmetric configurations (i.e., with no initial bias) when, e.g., \(\max_{i \in [k]}\{\conf_0(i)\} = O(\log n)\).
In particular, the \threemaj takes time at most \(O(n^{3/4} \log^{7/8} n)\) to reach consensus w.h.p., regardless of any other hypothesis on the initial configuration, while the \twochoices needs time \(\Omega(n / \log n)\) whenever \(\max_{i \in [k]}\{\conf_0(i)\} = O(\log n)\).
This was the first work to notice that, for a large number of opinions, the \threemaj dynamics is polynomially (in \(k\)) faster than the \twochoices dynamics.

The work \cite{Ghaffari2018} improved upon \cite{becchettiSimpleDynamicsPlurality2017} and showed that, for the \twochoices with \(k = O(\sqrt{n / \log n})\) and for the \threemaj with \(k = O(n^{1/3}/\sqrt{\log n})\), the convergence time to consensus is \(O(k \log n)\), with high probability.
Notice that this upper bound is tight according to the lower bound by \cite{becchettiSimpleDynamicsPlurality2017}, at least as long as \(k \le (n/\log n)^{1/4}\).
Furthermore, the authors showed that the convergence time of the \threemaj dynamics is \(O(n^{2/3} \log^{3/2} n)\) with high probability, regardless of the number of opinions.

Very recently, \cite{Shimizu2025} almost tightly settled the complexity of both the \threemaj and the \twochoices dynamics.
The authors proved that, w.h.p., the \threemaj dynamics reaches consensus in \(O(k\log n)\) rounds whenever \(k = o(\sqrt{n}/\log n)\), while it takes time \(O(\sqrt{n} \log^2 n )\) for other values of \(k\).
Furthermore, \cite{Shimizu2025} proved that plurality consensus is ensured w.h.p.\ as long as the initial count gap is \(\omega(\sqrt{n \log n})\).
As for the \twochoices, they showed that, w.h.p., the dynamics reaches consensus in \(O(k\log n)\) rounds whenever \(k = o(n/\log^2 n)\), while it takes time \(O(n \log^3 n)\) otherwise.
In density notation, these plurality requirements are \(\bias_0=\omega(\sqrt{\log n/n})\) for \threemaj and \(\bias_0 = \omega(\sqrt{\alpha_0(1)\log n/n})\) for \twochoices.\footnote{This is the only example of initial bias that gets ``close enough'' to what we require in \Cref{thm:plurality-informal}, but with an extra \(\sqrt{\log n}\) factor.}
These results almost match the generic lower bound given by \cite{BerenbrinkCEKMN17}, up to logarithmic factors.

\textbf{Undecided-State Dynamics.}
We already introduced another well-known consensus dynamics, the \undecided.
In the \undecided there is an extra opinion, the 
\textit{undecided} opinion.
The update rule works as follows:
A node samples one neighbour u.a.r.\ and pulls its opinion.
If the received opinion is different from the one it currently supports, the node becomes undecided.
If the node is undecided, it just copies whatever opinion it receives.
The \undecided has been studied both in the synchronous setting and in the population protocol model (asynchronous setting) \cite{AngluinAE08,ClementiGGNPS18,Becchetti2016,DAmoreCN20,DAmoreCN22,AmirABBHKL23,berenbrink2024,BankhamerBBEHKK22} and has often been considered to perform roughly the same as the \threemaj dynamics.
A special mention goes to \cite{Becchetti2016}, which analysed the \undecided dynamics in the synchronous setting and proved that it reaches consensus in $O(\frac{\gamma_0}{m_0^2} \log n)$ rounds \whp, as long as the number of opinions $k=O((n\log n)^{1/3})$. They also proved a matching lower bound up to logarithmic factors as long as $k=O((n\log n)^{1/6})$.
Prior to this work, this was the only result on the \undecided dynamics that provided strong and tight bounds, i.e., bounds that hold for every initial configuration.
A recent result of Cooper et al. \cite{cooper2026undecidedstatedynamicsopinions} shows that the Undecided-State reaches consensus in $O(\min\{ k, \sqrt{n} \}\log n)$ rounds \whp, regardless of the number of opinions.

\subsubsection*{Asynchronous setting.}
The only works that analysed the \threemaj dynamics in the asynchronous setting are \cite{CooperMRSS25,BerenbrinkCGHKR22}.
In \cite{BerenbrinkCGHKR22}, the authors consider the binary opinion case and show that the convergence time of the asynchronous \threemaj dynamics is \(O(n \log n)\) rounds, w.h.p., and that a bias of \(\Theta(\sqrt{n \log n})\) is sufficient to ensure plurality consensus, w.h.p.
The authors of \cite{CooperMRSS25} showed that the convergence time is \(O(\min\{kn\log^2 n, n^{3/2} \log^{3/2} n\})\), w.h.p., regardless of the number of initial opinions. 
They also provided a generic lower bound of \(\Omega(\min\{kn, n^{3/2}/\sqrt{\log n}\})\) rounds to reach consensus (starting from balanced configurations), w.h.p.
The work \cite{CooperMRSS25} (which came before \cite{Shimizu2025}) was the first to establish exactly how the linear-in-\(k\) dependence in the consensus time of the \threemaj dynamics breaks when the number of opinions exceeds \(\sqrt{n}\), in which case the consensus time is sublinear in the number of opinions.
The reader may observe that the asynchronous setting has the same qualitative convergence time as the synchronous model, up to a multiplicative factor \(n\).
This is to be expected: 
In the asynchronous setting, in a round, only one agent (sampled u.a.r.) updates its state, and hence we need roughly \(n\) rounds to activate all agents at least once (up to polylogarithmic factors).
For processes with small variance (smaller than that of the voter model), convergence times from the synchronous to the asynchronous setting usually scale with such a factor.
Note that this is not true for processes with large variance \cite{BecchettiCPTVZ24}.

\section{Proof Overview}\label{sec:proof-overview}
The proof idea behind the upper bound of \Cref{thm:main-informal} is simple. For both 3-Majority and 2-Choices, it is possible to compute the expectation of $\alpha_{t+1}(i)$ conditional on the configuration $\alpha^{(t)}$ at the previous round, and to show that
\[
  \E[\alpha_{t+1}(i) \mid \alpha^{(t)}]
  =
  \alpha_{t}(i)\pa{ 1 + \alpha_t(i) - \gamma_t }.
\]
As long as we have an unbalanced configuration, say $m_t \geq 2 \gamma_t$, and $i^\star$ is an opinion with density $m_t$, then $\E[\alpha_{t+1}(i^\star) \mid \alpha^{(t)}] = m_t \pa{ 1 + m_t - \gamma_t } \geq m_t (1+ \frac{1}{2}m_t)$.
Assuming that $m_t = \Omega(m_0)$, the density of such an opinion grows at every round by a multiplicative factor of $(1+ \frac{1}{2}m_0)$ in expectation. Thus, if the configuration remains unbalanced up to consensus, then the convergence time is $O(m_0^{-1} \log n)$, which is the desired upper bound.
Otherwise, let $\tau_\text{b}$ be the first round in which the configuration becomes balanced, i.e., $\gamma_{\tau_\text{b}} \geq m_{\tau_\text{b}}/2$. $\tau_\text{b} = O(m_0^{-1} \log n)$ since the monochromatic configuration is balanced.
We can apply Shimizu and Shiraga's result to the configuration at round $\tau_\text{b}$, which is balanced, to obtain consensus in $\tilde{O}(\gamma_{\tau_\text{b}}^{-1}) = \tilde{O}(m_0^{-1})$ additional rounds.
To make this argument rigorous, we concentrate around the expected growth of $m_t$ in the unbalanced phase by using the Freedman inequality for martingales as in \cite{Shimizu2025}.

The lower bound of \Cref{thm:main-informal} is also built on a lemma in \cite{Shimizu2025}. In fact, they show that for any opinion $i$
\[
  \Pr\!\left[
    \inf\{ t: \alpha_t(i) > c\, \alpha_0(i) \} \le
    \frac{C}{\alpha_0(i)}
  \right]
  \le
  \begin{cases}
    \exp(-\Omega(n\alpha_0(i)^2)), & \text{for 3-Majority},\\
    \exp(-\Omega(n\alpha_0(i))), & \text{for 2-Choices},
  \end{cases}
\]
hence, for $m_0$ large enough, the probability that any opinion grows by a constant factor in $O(m_0^{-1})$ rounds is negligible, and thus the convergence time is $\Omega(m_0^{-1})$ w.h.p.
We take care of the case of small $m_0$ by a coupling argument: if $m_0$ is too small, we couple the process with another process, using the same randomness, in which some opinions are merged so that the maximum initial density satisfies the condition. By symmetry and monotonicity, the convergence time of the original process is at least that of the coupled process, which is $\Omega(m_0^{-1})$ w.h.p. by the previous argument.

We also extend our analysis to the plurality consensus problem. The idea is very similar to the one for consensus time. We know that the bias between the largest and any remaining opinion densities grows at every round by a multiplicative factor of $(1+ \frac{1}{2}m_0)$ in expectation, as long as the configuration is unbalanced. Thus, with the same Freedman inequality argument, we prove that the condition on the bias in \Cref{thm:plurality-informal} is preserved until the configuration becomes balanced, and then we apply Shimizu and Shiraga's result to the balanced configuration to obtain plurality consensus. 
The argument is slightly more technical for 2-Choices, as the bias growth depends on the scaled bias $\eta_t(i,j)= \frac{\delta_t(i,j)}{\sqrt{\max\{\alpha_t(i),\alpha_t(j)\}}}$ rather than the bias $\delta_t(i,j)$, and we need to show that the condition on $\eta_0(i,j)$ in \Cref{thm:plurality-informal} is preserved until the configuration becomes balanced. To do this, we show that the bias grows at a higher rate than $\max\{\alpha_t(i),\alpha_t(j)\}$. In fact $\delta_t(i,j)$ grows at every round by a multiplicative factor of $(1+ \alpha_t(i) + \alpha_t(j) - \gamma_t)$ in expectation, while $\max\{\alpha_t(i),\alpha_t(j)\}$ grows by a multiplicative factor of $(1+ \max\{\alpha_t(i),\alpha_t(j)\} - \gamma_t)$ in expectation. 
\section{Model and Notation}
\label{sec:model}

We consider the complete graph on $V=[n]$ with self-loops.  Each vertex holds
an opinion in $\opinions=[k]$, where $2\le k\le n$.  At round $t$, let
$\alpha_t(i)$ denote the fraction of vertices holding opinion $i$, and write
$\alphavec^{(t)}=(\alpha_t(i))_{i\in\opinions}$.  We use
\[
  m_t=\|\alphavec^{(t)}\|_\infty=\max_{i\in\opinions}\alpha_t(i),
  \qquad
  \gamma_t=\|\alphavec^{(t)}\|_2^2=\sum_{i\in\opinions}\alpha_t(i)^2 .
\]
For two opinions $i,j$, define the signed bias
$\delta_t(i,j)=\alpha_t(i)-\alpha_t(j)$.  For 2-Choices we also use the scaled
bias
\[
  \eta_t(i,j)=
  \frac{\delta_t(i,j)}
       {\sqrt{\max\{\alpha_t(i),\alpha_t(j)\}}}.
\]

\textbf{3-Majority dynamics.}
In each round every vertex samples three vertices independently and uniformly
from $V$.  It adopts the majority opinion among the three samples; if all three
sampled opinions are different, the tie is broken uniformly among them.

\textbf{2-Choices dynamics.}
In each round every vertex samples two vertices independently and uniformly
from $V$.  If the two sampled vertices have the same opinion, the vertex adopts
that opinion; otherwise it keeps its current opinion.

\textbf{Notation.}
We denote the consensus time by
\[
  \tau_{\mathrm{cons}}
  =
  \inf\{t\ge0:\exists i\in\opinions \text{ such that } \alpha_t(i)=1\},
\]
and the first balancing time by
\[
  \tau_{\mathrm b}
  =
  \inf\{t\ge0:\gamma_t\ge m_t/2\}.
\]
We use the convention $\inf\emptyset=\infty$.
We say an event happens \emph{with high probability} (w.h.p.) if it happens with probability at least $1-n^{-2}$.


\section{Analysis: Consensus time}\label{sec:quick-proof}
In this section we state and prove a formal version of \Cref{thm:main-informal}.
\subsection{Upper bound.}
\begin{theorem}[Consensus-time upper bound]\label{thm:main}
Consider any initial configuration on the complete graph with self-loops, and
let $m_0=\|\alphavec^{(0)}\|_\infty$.  There is a universal constant $C>0$ such
that the following bounds hold with high probability.  For 3-Majority,
\[
  \tau_{\mathrm{cons}}
  \le
  \begin{cases}
    C m_0^{-1}\log n, & \text{if } m_0\ge C\log n/\sqrt n,\\
    C\sqrt n\,\log^2 n, & \text{otherwise}.
  \end{cases}
\]
For 2-Choices,
\[
  \tau_{\mathrm{cons}}
  \le
  \begin{cases}
    C m_0^{-1}\log n, & \text{if } n m_0\ge C(\log n)^2,\\
    Cn\log^3 n, & \text{otherwise}.
  \end{cases}
\]
In particular, 3-Majority reaches consensus in
$\tilde O(\min\{m_0^{-1},\sqrt n\})$ rounds and 2-Choices reaches consensus in
$\tilde O(m_0^{-1})$ rounds, as stated in \Cref{thm:main-informal}.
\end{theorem}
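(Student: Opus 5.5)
The plan is to split the trajectory at the first balancing time $\tau_{\mathrm b}$ and handle the unbalanced phase directly, invoking Shimizu and Shiraga's bounds~\cite{Shimizu2025} only from a balanced configuration. In the ``otherwise'' regimes ($m_0<C\log n/\sqrt n$ for 3-Majority, $nm_0<C(\log n)^2$ for 2-Choices) we simply quote their unconditional bounds, $O(\sqrt n\log^2 n)$ and $O(n\log^3 n)$ respectively, and there is nothing to prove. So assume $m_0$ is above the threshold. We will use the identity from the proof overview,
\[
  \E[\alpha_{t+1}(i)\mid\alphavec^{(t)}]=\alpha_t(i)\bigl(1+\alpha_t(i)-\gamma_t\bigr),
\]
valid for both dynamics. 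We also need the conditional variance of $\alpha_{t+1}(i)$, which is at most $O(\alpha_t(i)/n)$ for 3-Majority and $O(\alpha_t(i)/n)$ for 2-Choices; the conditional increments are bounded by $1$, being $n^{-1}$ times a sum of independent indicators.

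\emph{Step 1 (unbalanced phase).} Fix an opinion $i^\star$ with $\alpha_0(i^\star)=m_0$ and track $X_t=\alpha_t(i^\star)$ for $t<\tau_{\mathrm b}$. We want $\tau_{\mathrm b}\le T:=C'm_0^{-1}\log n$ w.h.p.

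While $m_t\ge2\gamma_t$ we cannot directly say $X_t\ge2\gamma_t$, since the maximizer may change. Two fixes are available:
\begin{enumerate}
\item track $m_t$ itself, using $m_{t+1}\ge\alpha_{t+1}(i_t)$ for the current maximizer $i_t$, so that $\E[m_{t+1}\mid\alphavec^{(t)}]\ge m_t(1+m_t/2)$;
\item track $X_t$ and note $\gamma_t\le m_t/2$ together with $X_t\le m_t$.
\end{enumerate}
I would use the first, i.e.\ the submartingale-type drift of $m_t$, via the random variable $\alpha_{t+1}(i_t)$. Consider $Y_t=\log m_t$, or better the process $Z_t=m_t^{-1}$ or $\log m_t$ stopped at $\tau_{\mathrm b}$.

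We want to show that $\log m_t$ increases by at least $m_t/4\ge m_0/8$ per round, as long as $m_t\ge m_0/2$. The noise per round has variance $O(1/(nm_t))$, and the drift per round is about $m_t/2$. Applying Freedman's inequality over $T$ rounds gives a total drift of order $\log n$ and a total variance of $T/(nm_0)=C'\log n/(nm_0^2)$. Concentration therefore needs $\log n\gg\sqrt{\log n\cdot\log n/(nm_0^2)}$, i.e.\ $nm_0^2\gtrsim 1$, which holds under the 3-Majority threshold.

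For 2-Choices the variance is smaller, so the per-round drift must be weighed more carefully. Split time into epochs in which $m_t$ doubles; an epoch starting at level $m$ lasts $O(m^{-1})$ rounds, with variance $O(1/(nm^2))\cdot m^{-1}\cdot m^{0}$ in the additive scale. Using additive Freedman on $\alpha_t$ rather than on $\log$ gives fluctuation $\sqrt{m\cdot m^{-1}\log n/n}=\sqrt{\log n/n}\ll m$ exactly when $nm^2\gg\log n$. This is the regime where the threshold $nm_0\ge C\log^2n$ seems to need more care.

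I would therefore do the additive version: in an epoch of length $L=c/m$, the drift is $\ge m^2L/2=cm/2$. The variance of the sum of increments is $O(L\cdot m/n)=O(1/n)$ for 2-Choices, giving fluctuations $\sqrt{\log n/n}$. Alternatively, use the ``variance $\le$ mean$/n$'' structure to get Bernstein-type deviations $\sqrt{(m/n)\log n}+\log n/n$, which is $\ll m$ iff $nm\gg\log n$. Summing over $O(\log n)$ epochs with a union bound gives total time $O(m_0^{-1}\log n)$, since the epoch lengths decrease geometrically; the stray $\log n$ factor is absorbed. This handles both dynamics, provided I verify the variance bound $\operatorname{Var}\le O(\alpha/n)$ carefully.

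\emph{Step 2 (balanced phase).} At $\tau_{\mathrm b}\le T$ we have $\gamma_{\tau_{\mathrm b}}\ge m_{\tau_{\mathrm b}}/2$, and by Step~1 $m_{\tau_{\mathrm b}}\ge m_0/2$ w.h.p., since $m_t$ did not drop below half its running level. Hence $\gamma_{\tau_{\mathrm b}}\ge m_0/4$, and the threshold condition on $m_0$ transfers to $\gamma$ up to constants. By the strong Markov property we apply Shimizu and Shiraga's upper bound $O(\gamma^{-1}\log n)$ from $\alphavec^{(\tau_{\mathrm b})}$, giving $O(m_0^{-1}\log n)$ further rounds. A union bound over the two phases completes the proof, with $C$ chosen large.

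\emph{Main obstacle.} The main obstacle is Step~1's concentration: controlling the drift of the running maximum when the maximizer may switch, and getting the 2-Choices variance bound sharp enough to reach the threshold $nm_0\ge C\log^2 n$. Here I would mimic Shimizu and Shiraga's Freedman argument applied to the potential $\log m_t$ or $m_t^{-1}$, stopped at $\tau_{\mathrm b}$ or at the first time $m_t<m_0/2$.
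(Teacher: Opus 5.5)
Your plan is essentially the paper's proof: doubling epochs in which you track $m_t$ through the current plurality opinion $i_t$ (via $m_{t+1}\ge\alpha_{t+1}(i_t)$ and drift at least $m_t^2/2$ while $\gamma_t<m_t/2$), an additive Freedman-type bound on each epoch of length $O(1/m)$ stopped at $\tau_{\mathrm b}$ or when $m_t$ falls below $m/2$, a geometric sum giving $\tau_{\mathrm b}=O(m_0^{-1})$ with $m_{\tau_{\mathrm b}}\ge m_0/2$, and then Shimizu--Shiraga applied from the balanced configuration at $\tau_{\mathrm b}$. Two points need tightening: the concentration should use the per-round one-sided Bernstein condition at scale $1/n$ rather than increments bounded by $1$, and for 2-Choices you need the sharper variance bound $\alpha_t(i)(\alpha_t(i)+\gamma_t)/n=O(m^2/n)$ from \Cref{lem:basic inequality} rather than the $O(\alpha_t(i)/n)$ you wrote, since only this gives the per-epoch failure probability $\exp(-\Omega(nm))$ behind your $\sqrt{(m/n)\log n}$ estimate and hence covers the threshold $nm_0\ge C(\log n)^2$.
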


In the following lemma, we show that from an unbalanced configuration, the process reaches a balanced configuration in $O(m_0^{-1})$ rounds. We also show that, up to that time, $m_t = \Omega(m_0)$.

\begin{lemma}[Fast balancing from the infinity norm]
\label{lem:quick-balance}
Recall the balancing time $\tau_{\mathrm b}$ from \Cref{sec:model}.
There are universal constants $C,c>0$ such that the following
holds.  For
3-Majority, if $n m_0^2\ge C\log n$, then with probability at least
$1-n^{-5}$ there is a time $s\le C m_0^{-1}$ such that
\[
  s=\tau_{\mathrm b},
  \qquad
  m_t\ge c\, m_0
  \quad\text{for every }0\le t\le s .
\]
For 2-Choices, the same statement holds under the condition
$n m_0\ge C(\log n)^2$.
\end{lemma}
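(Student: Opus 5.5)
We plan to track a single opinion $i^\star$ with $\alpha_0(i^\star)=m_0$ and show that, as long as the configuration is unbalanced, its density grows geometrically. Since $m_t\ge\alpha_t(i^\star)$, this gives both conclusions. From the conditional expectation formula of \Cref{sec:proof-overview}, on the event $\{t<\tau_{\mathrm b}\}$ we have $\gamma_t<m_t/2$, and the trouble is that $m_t$ may be attained by an opinion other than $i^\star$. We therefore work with $X_t=\alpha_t(i^\star)$ and use $\gamma_t< m_t/2\le\dots$ together with a relation between $X_t$ and $m_t$.

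A cleaner route is to note that $\gamma_t\ge X_t^2$ does not help. Instead we follow $M_t=m_t$ directly. For each opinion $i$ achieving the maximum, $\E[\alpha_{t+1}(i)\mid\alphavec^{(t)}]=\alpha_t(i)(1+\alpha_t(i)-\gamma_t)\ge m_t(1+m_t/2)$ before $\tau_{\mathrm b}$. Hence $\E[m_{t+1}\mid\alphavec^{(t)}]\ge m_t(1+m_t/2)$, since $m_{t+1}\ge\alpha_{t+1}(i)$ for a fixed maximizer $i$ chosen at time $t$. We pass to the logarithmic process $Y_t=\log m_{t\wedge\tau_{\mathrm b}}$, or more conveniently to $Z_t=-1/m_t$. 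For $Z_t$ the drift is about $+1/2$ per round, since $1/m_t-1/(m_t(1+m_t/2))\approx 1/2$. This yields a submartingale-type increment bound: $\E[Z_{t+1}-Z_t\mid\mathcal F_t]\ge \tfrac12-O(\text{variance}/m_t^3)$ on $\{t<\tau_{\mathrm b}\}$. Because $Z_t\le -1$ always and $Z_0=-1/m_0$, the balancing time must arrive within about $2/m_0$ rounds, provided the martingale part is controlled.

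The variance inputs are the one-round variances: $\mathrm{Var}[\alpha_{t+1}(i)\mid\alphavec^{(t)}]\le O(\alpha_t(i)^2/n)$ for 3-Majority, because the adoption probability is $\Theta(\alpha_t(i))$ plus tie effects, and $\le O(\alpha_t(i)/n)$ for 2-Choices, where nonholders adopt with probability $\alpha_t(i)^2$ and holders leave with probability $O(\alpha_t(i))$. This is exactly where the two conditions $nm_0^2\ge C\log n$ and $nm_0\ge C\log^2 n$ enter. We would apply Freedman's inequality, as in \cite{Shimizu2025}, to the martingale part of $Z$ (or of $\log m_t$) stopped at $\tau_{\mathrm b}\wedge Cm_0^{-1}\wedge\sigma$, where $\sigma=\inf\{t:m_t<cm_0\}$. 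On this window the summed variances are at most $Cm_0^{-1}\cdot O(1/(nm_0^2))$ in $\log$-scale for 3-Majority, and the increments are bounded by $O(\sqrt{\log n/(nm_0^2)})$ w.h.p. For 2-Choices the corresponding bounds are $O(1/(nm_0^2))$ and $O(\log n/(nm_0))$ per round after truncation via Chernoff. Freedman then gives deviations $o(1)$ in $\log m$, which are well below the $\Theta(1)$ drift accumulated over $Cm_0^{-1}$ rounds, with failure probability $n^{-6}$.

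We expect the main obstacle to be the identity of the maximizer changing over time, together with the fact that $m_t$ is a maximum of martingale-like quantities rather than one. We would handle this by using $\E[m_{t+1}]\ge\E[\alpha_{t+1}(i_t)]$ for the current argmax $i_t$, so that the drift inequality holds for $\log m_t$ with a one-sided (lower) increment bound. This suffices because Freedman is applied only to the lower tail of a process whose conditional increments are dominated from below by those of $\log\alpha_{t+1}(i_t)$. Finally, we combine these pieces. On the good event, $\log m_t\ge\log m_0-o(1)$ throughout, giving $m_t\ge cm_0$ and hence $\sigma$ is not hit. Moreover, $\log m_t$ would exceed $\log 1=0$ by time $Cm_0^{-1}$ if balance had not occurred; this is impossible, so $s=\tau_{\mathrm b}\le Cm_0^{-1}$.
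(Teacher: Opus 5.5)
\textbf{Verdict: genuine gap in the concentration step.} Your overall structure is the paper's. You track the drift of $m_t$ through the current plurality opinion $i_t$ via $m_{t+1}\ge\alpha_{t+1}(i_t)$, stop at $\tau_{\mathrm b}$ and at loss of mass, and apply a lower-tail Freedman bound. But the concentration step, which is the real content of the lemma, does not close as written, because your one-round variances are swapped between the two dynamics.

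\textbf{3-Majority.} Every vertex independently adopts $i$ with probability $\alpha_t(i)(1+\alpha_t(i)-\gamma_t)=\Theta(\alpha_t(i))$. So $\mathrm{Var}_t[\alpha_{t+1}(i)]=\Theta(\alpha_t(i)/n)$, not $O(\alpha_t(i)^2/n)$. You also feed the w.h.p.\ increment bound $\sqrt{\log n/(nm_0^2)}$ into Freedman as the range parameter. Then the range term dominates, and you only get $\exp(-\Omega(\sqrt{nm_0^2/\log n}))$, which is a constant at the threshold $nm_0^2=C\log n$. You need the one-sided Bernstein condition with range $1/n$ from \Cref{lem:Bernstein condition for sync processes}, i.e.\ range $O(1/(nm_0))$ in log scale. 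With that, the correct summed log-variance $O(1/(nm_0^2))$ is exactly where $nm_0^2\ge C\log n$ enters.

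\textbf{2-Choices.} Your own description gives $O(m_t^2/n)$, not $O(\alpha/n)$: nonholders join $i_t$ with probability $m_t^2$, and holders leave with probability $\gamma_t-m_t^2<m_t/2$. The slip breaks the argument as stated. With $O(\alpha/n)$, the summed log-scale variance over $Cm_0^{-1}$ rounds is $O(1/(nm_0^2))$. This is unbounded under $nm_0\ge C(\log n)^2$ (take $m_0=(\log n)^3/n$), so your claimed $o(1)$ deviation is false. Only the sharper proxy $\alpha(\alpha+\gamma)/n\le\frac32 m_t^2/n$, which uses $\gamma_t<m_t/2$, brings it down to $O(1/(nm_0))$.

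\textbf{Transform issues you pass over.} The event $\alpha_{t+1}(i_t)=0$ has positive probability. Hence $\mathbb{E}_t[\log\alpha_{t+1}(i_t)]$ and $\mathbb{E}_t[-1/\alpha_{t+1}(i_t)]$ are both $-\infty$. Your ``domination from below'' therefore needs a truncation plus a Jensen correction before any Bernstein condition holds for the transformed increments. Separately, in log scale a $\Theta(1)$ accumulated drift does not carry $\log m_t$ from $\log m_0$ up to $0$. You must use the self-accelerating drift: either the potential $-1/m_t$, whose drift is about $1/(2+m_t)$, or a discrete Gronwall comparison.

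\textbf{How the paper avoids this.} It works in dyadic phases and stays in linear scale while $m_u\in[x/2,2x)$. There the drift is at least $x^2/8$, and the variance proxy is at most $2x/n$ (3-Majority) or $6x^2/n$ (2-Choices). Freedman with $h=x/2$ over $\lceil 20/x\rceil$ rounds then fails with probability $\exp(-\Omega(nx^2))$, resp.\ $\exp(-\Omega(nx))$. The phase lengths sum to $O(m_0^{-1})$, and the first phase dominates the failure probability.

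With corrected variance inputs and these technicalities handled, your $-1/m_t$ route could also be made to work. The phase decomposition is what keeps the argument short.
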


\begin{proof}
If $\gamma_0\ge m_0/2$, there is nothing to prove, taking $s=0$. Otherwise we run the
process in dyadic phases.  Fix a phase scale $x\ge m_0$ and a phase-starting
round $r$ with $m_r\ge x$ and $\gamma_r<m_r/2$.  Let
\[
  \sigma=\inf\{u\ge r:\gamma_u\ge m_u/2
  \text{ or } m_u\ge 2x \text{ or } m_u<x/2\}
\]
and let $H=\lceil 20/x\rceil$.  For $u<\sigma$, choose a plurality opinion
$i_u$ by deterministic tie-breaking, so that $\alpha_u(i_u)=m_u$.  Since
$i_u$ is $\calF_u$-measurable, \Cref{item:alpha} of
\Cref{lem:basic inequality} gives
\[
  \E_u[\alpha_{u+1}(i_u)]
  =m_u(1+m_u-\gamma_u)
  \ge m_u(1+m_u/2).
\]
Thus, while $u<\sigma$,
\[
  \E_u[\alpha_{u+1}(i_u)]-m_u\ge m_u^2/2\ge x^2/8.
\]
Moreover, $m_{u+1}\ge\alpha_{u+1}(i_u)$.

Define the stopped martingale, for $0\le q\le H$,
\[
  M_q=\sum_{u=r}^{r+q-1}\indicator_{\{u<\sigma\}}
  \bigl(\alpha_{u+1}(i_u)-\E_u[\alpha_{u+1}(i_u)]\bigr).
\]
By \Cref{item:BC for alpha} of
\Cref{lem:Bernstein condition for sync processes}, the lower-tail increments
$\E_u[\alpha_{u+1}(i_u)]-\alpha_{u+1}(i_u)$ satisfy a one-sided Bernstein
condition with $\bounded=1/n$.  During the phase, $m_u<2x$ and
$\gamma_u<m_u/2<x$, so the variance parameter is at most
\[
  \variance_x=
  \begin{cases}
    2x/n, & \text{for 3-Majority},\\
    6x^2/n, & \text{for 2-Choices}.
  \end{cases}
\]
Applying \Cref{cor:Freedman} to $-M_q$ with $h=x/2$ yields
\[
  \Prob\!\left[\exists q\le H:\ M_q\le -x/2\mid\calF_r\right]
  \le
  \begin{cases}
    \exp(-\Omega(n x^2)), & \text{for 3-Majority},\\
    \exp(-\Omega(n x)), & \text{for 2-Choices}.
  \end{cases}
\]
On the complement of this event, the phase cannot end by the condition
$m_u<x/2$: summing the inequalities
$m_{u+1}-m_u\ge \alpha_{u+1}(i_u)-m_u$ would force
$M_q\le -x/2$ at the first such time.  If the phase has not ended by time
$r+H$, then every round before $r+H$ has drift at least $x^2/8$, and hence
\[
  m_{r+H}
  \ge m_r+H x^2/8+M_H
  >2x,
\]
again a contradiction.  Therefore, except with the displayed probability, the
phase ends within $H=O(1/x)$ rounds either by hitting $\tau_{\mathrm b}$ with
$m_{\tau_{\mathrm b}}\ge x/2$, or by reaching mass at least $2x$.

Starting with $x=m_0$ and iterating over $x,2x,4x,\ldots$ until $x\ge1/4$,
the total number of rounds is
\[
  \sum_q O((2^q m_0)^{-1})=O(m_0^{-1}).
\]
If no earlier balanced time is reached, then the final doubling gives
$m_t\ge1/2$, which itself implies $\gamma_t\ge m_t^2\ge m_t/2$.  Thus
$t=\tau_{\mathrm b}$ and $m_t=\Omega(m_0)$.
Moreover, on the same successful event, no phase ends through the stopping
condition $m_u<x/2$.  Hence, throughout each phase and until the balancing
time is reached, $m_u\ge x/2\ge m_0/2$.  This gives the stated lower bound on
$m_t$ for every $t\le\tau_{\mathrm b}$, after setting
$c=1/2$.
The sum of the phase failure probabilities is at most $n^{-5}$ after
increasing $C$, under
$n m_0^2\ge C\log n$ for 3-Majority and
$n m_0\ge C(\log n)^2$ for 2-Choices.
\end{proof}

Now, we can complete the proof of \Cref{thm:main} by applying the Shimizu--Shiraga large-\(\gamma\) theorem from the random time $\tau_{\mathrm b}$, conditioned on the current configuration.

\begin{proof}[Proof of \Cref{thm:main}]
Consider first either dynamics in the regime where its high-mass condition
holds: $m_0\ge C\log n/\sqrt n$ for 3-Majority, or
$n m_0\ge C(\log n)^2$ for 2-Choices.  We condition on the event of
\Cref{lem:quick-balance}.  At the balancing time $s$ we have
\[
  \gamma_s\ge m_s/2=\Omega(m_0).
\]
In the 3-Majority case, by increasing $C$ if necessary this gives
$\gamma_s\ge C_{\mathrm{SS}}\log n/\sqrt n$, where $C_{\mathrm{SS}}$ is the
constant required by the Shimizu--Shiraga large-\(\gamma\) theorem.  Applying
that theorem from the random time $s$, conditioned on the current
configuration, gives consensus in
\[
  O\!\left(\gamma_s^{-1}\log n\right)
  \le
  O\!\left(m_0^{-1}\log n\right)
\]
additional rounds.  For 2-Choices the same black-box theorem gives
\[
  O(\gamma_s^{-1}\log n)=O(m_0^{-1}\log n)
\]
additional rounds in the high-mass regime $n m_0\ge C(\log n)^2$.

If the 3-Majority high-mass condition fails, then
$m_0<C\log n/\sqrt n$, and the original Shimizu--Shiraga all-configuration
bound gives the stated $O(\sqrt n\log^2 n)$ branch.  For 2-Choices below
$n m_0\ge C(\log n)^2$, we use the corresponding Shimizu--Shiraga
norm-growth fallback, incurring their stated polylogarithmic loss.

\end{proof}

\subsection{Lower bound}\label{sec:quick-lower-bound}
In this section we state and prove a formal version of the lower bound in \Cref{thm:main-informal}.
\begin{theorem}[Consensus-time lower bound]\label{thm:main-lower}
Assume that the initial configuration is not already at consensus.  There is a
universal constant $c>0$ such that, with probability at least $1-n^{-3}$,
3-Majority satisfies
\[
  \tau_{\mathrm{cons}}
  \ge
  c\min\!\left\{m_0^{-1},\sqrt{\frac{n}{\log n}}\right\},
\]
and 2-Choices satisfies
\[
  \tau_{\mathrm{cons}}
  \ge
  c\min\!\left\{m_0^{-1},\frac{n}{\log n}\right\}.
\]
\end{theorem}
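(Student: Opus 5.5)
The plan follows the two-step outline of \Cref{sec:proof-overview}. First I treat the regime where $m_0$ is large enough for the Shimizu--Shiraga anti-growth lemma to be strong. Then I reduce the small-$m_0$ regime to that one by merging opinions.

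\textbf{Large-mass regime.} Assume $n m_0^2\ge C'\log n$ for 3-Majority, or $n m_0\ge C'\log n$ for 2-Choices. I invoke the lemma of \cite{Shimizu2025} recalled in \Cref{sec:proof-overview}: for every opinion $i$ with $\alpha_0(i)\ge x$, the probability that $\alpha_t(i)$ exceeds a constant factor $c_1$ times its start (say $c_1=2$) within $C_1/x$ rounds is at most $\exp(-\Omega(nx^2))$ for 3-Majority, or $\exp(-\Omega(nx))$ for 2-Choices.

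The lemma needs a starting mass of order $x=m_0$, while small opinions can grow by large relative factors. I therefore group opinions into blocks. Greedily partition $[k]$ into groups $G_1,\dots,G_\ell$, each of total initial density in $[m_0,2m_0]$ except possibly the last, which I merge into another; then $\ell\le 1/m_0\le n$. Merging opinions of a group gives again a 3-Majority or 2-Choices process for the merged labels. This is not literally true for 3-Majority, since a three-way tie among distinct opinions of the same group behaves differently. I will handle it by noting that, coupled on the same samples, the merged group's count stochastically dominates its count in the original process; the lemma then applies to the merged process, and an upper bound on merged growth upper-bounds the original. For 2-Choices, merging is only monotone in the same direction, by the same coupling.

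A union bound over at most $1/m_0$ groups, with failure probability $n\exp(-\Omega(C'\log n))\le n^{-3}$, shows that every group stays below $2\cdot 2m_0=4m_0<1$ for $cm_0^{-1}$ rounds (if $m_0\le 1/8$). Hence no opinion reaches density $1$ and $\tau_{\mathrm{cons}}\ge c m_0^{-1}$. If $m_0>1/8$, the claim reduces to $\tau_{\mathrm{cons}}\ge 1$, which holds because the configuration is not at consensus and $c$ is small.

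\textbf{Small-mass regime.} When $m_0$ is below the threshold, let $x^\star=\sqrt{C'\log n/n}$ (respectively $C'\log n/n$) and repeat the grouping with target density $x^\star$ in place of $m_0$. Each group then has mass in $[x^\star,2x^\star]$, so the lemma applies with $x=x^\star$ and gives that no group exceeds $4x^\star<1$ within $c/x^\star$ rounds. Since each opinion lies in some group, no single opinion reaches consensus. This yields $\tau_{\mathrm{cons}}\ge c\sqrt{n/\log n}$ for 3-Majority and $c\,n/\log n$ for 2-Choices, which is the minimum in the statement.

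\textbf{Main obstacle.} The hard step is justifying that the lemma applies to a merged group rather than a genuine opinion. What I need is a dominance statement: under the natural coupling, the merged count in the original process is at most the count of a single opinion $G$ in the process where $G$'s members are relabelled as one opinion. I would check this per node, per round, by induction. The node outcome is monotone in the set of samples lying in $G$; relabelling can only turn a three-way tie among members of $G$ into a $G$-majority, and otherwise agrees. If that fails, the fallback is to reprove the Shimizu--Shiraga growth bound directly for sets. A Freedman argument applied to $\sum_{i\in G}\alpha_t(i)$ works, because its drift is $\sum_{i\in G}\alpha_t(i)(\alpha_t(i)-\gamma_t)\le (\sum_{i\in G}\alpha_t(i))^2$, mirroring the proof of \Cref{lem:quick-balance} with reversed sign.
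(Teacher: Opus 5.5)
Your overall plan is the paper's plan. You merge opinions into blocks whose initial mass is at least the threshold ($\sqrt{\log n/n}$, resp.\ $\log n/n$, up to constants), apply \Cref{lem:drift analysis for basic} to each merged block over $\Theta(1/\max\{m_0,a_\star\})$ rounds, and union-bound over at most $n$ blocks. Your disjoint partition and two-regime split, versus the paper's per-opinion sets $S_i$ with $a=\max\{m_0,a_\star\}$, are cosmetic.

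The step that does not go through as you justify it is the dominance coupling. Write $g$ for the merged label. Your induction (``relabelling only turns a three-way tie among members of $G$ into a $G$-majority, and otherwise agrees'') needs the two coupled processes to coincide outside $G$. That holds at time $0$ but not afterwards. Take $b\neq c$ outside $G$ and use shared samples and tie-breaks:
\begin{itemize}
\item Round 1: a sample $(s_1,s_2,b)$ with $s_1\neq s_2\in G$ can send a vertex to $b$ in the original but to $g$ in the merged process.
\item Round 2: a vertex seeing $(b,b,c)$ in the original and $(g,b,c)$ in the merged process can end at $b$ and at $c$ respectively.
\item Round 3: a vertex seeing $(s,c,b)$ in the original and $(g,c,c)$ in the merged process enters $G$ in the original (tie-break probability $1/3$), but goes to $c\neq g$ in the merged one.
\end{itemize}
The same-samples coupling for 2-Choices breaks in the same way. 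First a vertex at $l$ is pulled into $g$ only in the merged process. A vertex at $l_1$ that samples it together with another $l$-vertex then moves to $l$ in the original but stays at $l_1$ in the merged process. A $G$-vertex that next samples this vertex together with an $l_1$-vertex stays in $G$ in the original but leaves $g$ in the merged one. The paper's \Cref{lem:target-merge-monotone} uses exactly this coupling and this induction. So you are not behind the paper here, but neither argument proves the dominance as written.

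Your fallback is the right repair, and it needs no coupling. Apply \Cref{cor:Freedman-predictable} directly to $\alpha_t(G)=\sum_{j\in G}\alpha_t(j)$ for each block $G$.
\begin{itemize}
\item Drift: it equals $\sum_{j\in G}\alpha_t(j)(\alpha_t(j)-\gamma_t)\le\alpha_t(G)^2$.
\item Variance: given round $t$, $n\alpha_{t+1}(G)$ is a sum of independent indicators. The conditional variance of $\alpha_{t+1}(G)$ is at most $2\alpha_t(G)/n$ for 3-Majority and at most $\alpha_t(G)(\alpha_t(G)+\gamma_t)/n$ for 2-Choices.
\end{itemize}
Let $x$ be the block scale ($m_0$ or $x^\star$), and stop $G$ once it exceeds $\alpha_0(G)+x/2$. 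Over $c/x$ rounds the drift is then $O(cx)$. For 3-Majority the total variance is $O(c/n)$, which gives $\exp(-\Omega(nx^2))$.

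For 2-Choices you need one more ingredient to get $\exp(-\Omega(nx))$ rather than $\exp(-\Omega(nx^2))$. The term $\alpha_t(G)\gamma_t/n$ must be $O(x^2/n)$, so stop globally at the first time \emph{any} block exceeds its threshold. Since the blocks cover all opinions, before that time $\gamma_t\le m_t\le\max_G\alpha_t(G)=O(x)$. The total variance over $c/x$ rounds is then $O(cx/n)$, and Freedman with $h=\Theta(x)$ gives $\exp(-\Omega(nx))$. This is where your disjoint partition pays off compared with the paper's overlapping sets $S_i$.
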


The next lemma proves the elementary fact that merging opinions into a distinguished
opinion can only help that distinguished opinion win. It will be helpful to give a lower bound on the consensus time when for some opinion $i$, its initial density $\alpha_0(i)$ is very small.

\begin{lemma}[Target-wise merging is monotone]
\label{lem:target-merge-monotone}
Fix an opinion $i$ and a set $S\subseteq\opinions$ with $i\in S$.  Let
$\widetilde X_t$ be the process obtained from the initial configuration by
identifying all opinions in $S$ with $i$, and leaving all other opinions
unchanged.  Let $X_t$ be the original process.  For both 3-Majority and
2-Choices, the processes can be coupled so that, for every round $t$, every
vertex whose opinion in $X_t$ belongs to $S$ has opinion $i$ in
$\widetilde X_t$.  Consequently,
\[
  \Prob_{X}[\exists t\le T:\alpha_t(i)=1]
  \le
  \Prob_{\widetilde X}[\exists t\le T:\widetilde\alpha_t(i)=1].
\]
\end{lemma}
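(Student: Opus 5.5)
We build the coupling directly from shared randomness. Both processes run on the same vertex set $V=[n]$. In each round, every vertex $v$ draws the same samples in both processes: the same three indices $(u_1,u_2,u_3)$ for 3-Majority, or the same two indices $(u_1,u_2)$ for 2-Choices. For 3-Majority we also need to couple the tie-breaking. We give each vertex a uniform random position $p_v\in\{1,2,3\}$, shared by both processes, and whenever a three-way tie occurs in a process, the vertex adopts the opinion held by sample $u_{p_v}$. This is a valid implementation of uniform tie-breaking, because in a three-way tie the three samples carry distinct opinions. Define $\pi:\opinions\to\opinions$ by $\pi(j)=i$ for $j\in S$ and $\pi(j)=j$ otherwise. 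The merged process $\widetilde X$ starts from $\widetilde X_0(v)=\pi(X_0(v))$.

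We prove by induction on $t$ the invariant $\mathcal I_t$: for every vertex $v$, if $X_t(v)\in S$ then $\widetilde X_t(v)=i$. The invariant holds at $t=0$ by construction. For the inductive step, fix a vertex $v$ with $X_{t+1}(v)\in S$ and look at how $X$ produced that opinion.

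For 2-Choices there are two cases. In the first case, $X_t(u_1)=X_t(u_2)=j\in S$. By $\mathcal I_t$ both samples hold $i$ in $\widetilde X_t$, so they agree there and $v$ adopts $i$. In the second case, the samples disagree in $X_t$ and $v$ keeps $X_t(v)\in S$. If the samples agree in $\widetilde X_t$, they agree on some opinion $j'$. Here we need the fact that if two vertices hold the same opinion in $X_t$ outside $S$, they need not have the same $\widetilde X$ opinion, so we must check that $j'=i$. This is the delicate point, and it needs a stronger invariant.

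We therefore strengthen the invariant to the exact statement $\widetilde X_t(v)=\pi(X_t(v))$ for all $v$. Under shared samples, this makes the merged process literally the image of the original under $\pi$, and the induction goes through for 2-Choices by case analysis:
\begin{itemize}
\item if $X_t(u_1)=X_t(u_2)$, then $\pi$ of the two samples also agree, and $v$ adopts $\pi$ of that opinion in $\widetilde X$, which is $\pi(X_{t+1}(v))$;
\item if $X_t(u_1)\ne X_t(u_2)$ and both lie in $S$, then $\widetilde X$ adopts $i$. The original keeps $X_t(v)$, which may lie outside $S$, and then $\widetilde X_{t+1}(v)\ne\pi(X_{t+1}(v))$.
\end{itemize}
So exact equality fails, and the true invariant is one-sided. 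We therefore use $\mathcal I_t$ together with the companion property $\mathcal J_t$: if $\widetilde X_t(v)\ne i$, then $\widetilde X_t(v)=X_t(v)\notin S$. In words, $\widetilde X$ agrees with $X$ except that some vertices have been converted to $i$. With $\mathcal I_t\wedge\mathcal J_t$ as the invariant, each case checks as follows.

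For 2-Choices, suppose $\widetilde X_{t+1}(v)=j\ne i$. If $\widetilde X$ adopted $j$ from two agreeing samples, then by $\mathcal J_t$ both samples equal $j$ in $X_t$, so $X_{t+1}(v)=j$. If $\widetilde X$ kept $j$, then $X_t(v)=j$; in $X$ the samples either disagree, and $v$ keeps $j$, or agree on some $j''$. In the latter case, if $j''\in S$ then $\widetilde X$ would see agreement on $i$ and adopt $i$, a contradiction. If $j''\notin S$ then both samples are either $j''$ or $i$ in $\widetilde X$, and since $\widetilde X$ kept $j$ they must disagree there, so one of them is $i$ while its $X$-opinion is $j''\notin S$.

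For 3-Majority, the same fixed-point style check applies with majority and the shared tie position. The case where some sample is converted to $i$ is the problematic one in both dynamics. I expect this case analysis, that is, a vertex converted to $i$ in $\widetilde X$ but not in $X$ breaking an agreement or majority for an opinion $j\notin S$, to be the main obstacle. I would first try to prove a monotonicity statement with $i$ as a "winning" label. If that fails, the fallback is to weaken the conclusion so that it only concerns the $i$-mass: show by induction the set inclusion $\{v:X_t(v)\in S\}\subseteq\{v:\widetilde X_t(v)=i\}$. For this inclusion the only needed fact is that if $v$ adopts an opinion of $S$ in $X$, then the samples responsible hold $i$ in $\widetilde X$. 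Their count of $i$ in $\widetilde X$ then only increases, and all other opinions among those samples only lose members, so the majority or agreement on $i$ persists. For 3-Majority with tie-breaking this works because a plurality or tie position landing in $S$ in $X$ lands on $i$ in $\widetilde X$, and merging can only turn a tie into an $i$-majority.

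The consequence then follows immediately. The event $\alpha_t(i)=1$ means every vertex holds $i\in S$ in $X_t$, so by the inclusion every vertex holds $i$ in $\widetilde X_t$. The coupling therefore gives $\{\exists t\le T:\alpha_t(i)=1\}\subseteq\{\exists t\le T:\widetilde\alpha_t(i)=1\}$, and taking probabilities gives the claim.
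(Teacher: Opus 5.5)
Your final fallback is essentially the paper's argument: shared samples and a shared tie position, with induction on the one-sided inclusion $\{v:X_t(v)\in S\}\subseteq\{v:\widetilde X_t(v)=i\}$. It does not go through, and your own case analysis already shows why. The sentence ``all other opinions among those samples only lose members'' is exactly $\mathcal J_t$, which you found is not preserved. Under $\mathcal I_t$ alone, the $\widetilde X$-opinions of samples whose $X$-opinion lies outside $S$ are unconstrained. For 2-Choices you also drop the case where $v$ stays in $S$ by \emph{keeping} its opinion, which is the case you had flagged as delicate.

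In fact, the shared-randomness coupling violates the inclusion with positive probability. In 2-Choices take $S=\{i,j\}$ and two opinions $\ell,d\notin S$; every vertex not mentioned below resamples itself.
In round $0$, a vertex $w_1$ with $X_0(w_1)=\ell$ samples an $i$-vertex and a $j$-vertex. It keeps $\ell$ in $X$ but adopts $i$ in $\widetilde X$.
In round $1$, a vertex $u_1$ holding $d$ in both processes samples $w_1$ and another $\ell$-vertex. It adopts $\ell$ in $X$, but sees $(i,\ell)$ in $\widetilde X$ and keeps $d$.
In round $2$, a vertex $v$ with $X_2(v)=j$ samples $u_1$ and a vertex $u_2$ holding $d$ in both processes. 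In $X$ it sees $(\ell,d)$ and keeps $j\in S$; in $\widetilde X$ it sees $(d,d)$ and moves to $d\neq i$.
The 3-Majority analogue uses ties. First, $(i,j,\ell)$ with tie position on $\ell$ converts a vertex. Next, $(\ell,\ell,d)$ versus $(i,\ell,d)$ with tie position on $d$ produces a vertex with $X=\ell$ and $\widetilde X=d$. Finally, $(j,\ell,d)$ with tie position on $j$ versus $(i,d,d)$ sends a vertex of $S$ to $d$.

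This is not just a defect of the particular coupling. The inclusion alone cannot serve as the induction hypothesis for any round-by-round coupling, because it does not order the next-step probabilities. Suppose that in $X$ the $S$-vertices all hold $i$ with small density $a$ and the rest is spread over many negligible opinions. Suppose that in $\widetilde X$ the same vertices hold $i$ and the rest is a single opinion. Then for 3-Majority $\Pr[X\text{-output}\in S]\approx a$, whereas $\Pr[\widetilde X\text{-output}=i]=a(3a-2a^2)\approx 3a^2$; 2-Choices behaves similarly. So a correct proof must also control the non-$S$ part of $\widetilde X$. The natural two-sided invariant $\widetilde X_t(v)\in\{\pi(X_t(v)),i\}$ is not preserved, as you saw.

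The paper's own short proof has the same hole. Its case analysis compares sample tuples that differ only by replacing entries with the target, so it tacitly assumes $\mathcal J_t$. Neither argument as written establishes the coupling.

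For the lemma's purpose in \Cref{thm:main-lower}, you can sidestep the coupling entirely. The team mass $A_t=\sum_{j\in S}\alpha_t(j)$ of the original process satisfies $\mathbb{E}_t[A_{t+1}]=\sum_{j\in S}\alpha_t(j)(1+\alpha_t(j)-\gamma_t)\le A_t(1+A_t-\gamma_t)$. Moreover, $A_{t+1}$ is an average of $n$ conditionally independent indicators, with conditional variance at most $2A_t/n$ for 3-Majority and $A_t(A_t+\gamma_t)/n$ for 2-Choices. These are the same drift and Bernstein-type bounds, up to constants, that a single opinion of density $A_t$ satisfies. A drift argument of the type behind \Cref{lem:drift analysis for basic} can therefore be run on $A_t$ directly.
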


\begin{proof}
Use the same sampled vertices in the two processes, and use the same
tie-breaking random variable in 3-Majority.  Assume inductively that the target
indicator in the merged process dominates the indicator that the original
opinion lies in $S$.  For 3-Majority, replacing a sampled opinion outside the
target by the target cannot decrease the probability that the majority-or-tie
rule outputs the target: with zero target samples the output is outside, with
one target sample the target is selected exactly in the all-distinct tie case,
and with at least two target samples the merged process selects the target.
For 2-Choices, if the vertex is already in the target then the only way to
leave the target is to sample two equal outside opinions; after merging this
event can only become less likely.  If the vertex is outside the target, it
enters the target only by sampling two target opinions, and this event can only
become more likely after merging.  This gives the coupling step by step.
\end{proof}

Now we can complete the proof of \Cref{thm:main-lower} by applying \Cref{lem:drift analysis for basic} from \cite{Shimizu2025} to the merged process. The lemma states that if $\alpha_0(i)$ is large enough, it takes $\alpha_0(i)^{-1}$ rounds to increase by a constant multiplicative factor, and therefore to reach mass $1$.
\begin{proof}[Proof of \Cref{thm:main-lower}]
We give the argument for a generic threshold $a_\star$ and then substitute the
two values.  Put
\[
  a=\max\{m_0,a_\star\}.
\]
If $a>1/4$, the claimed lower bound is a positive constant after choosing $c$
small enough, and it follows from the assumption that the initial configuration
is not already at consensus.  Hence assume $a\le1/4$.

Fix a possible winning opinion $i$ with $\alpha_0(i)>0$.  If
$\alpha_0(i)\ge a_\star$, let $S_i=\{i\}$.  Otherwise, add opinions to $S_i$,
starting from $\{i\}$, until its total initial mass first reaches
$a_\star$.  By minimality, if
\[
  \widetilde a_i=\sum_{j\in S_i}\alpha_0(j),
\]
then
\[
  a_\star\le \widetilde a_i\le a_\star+m_0\le 2a .
\]
Consider the process in which all opinions in $S_i$ are merged into $i$ at
time $0$.  By \Cref{lem:target-merge-monotone}, the probability that the
original process reaches consensus on $i$ by time $T$ is at most the
probability that, in this merged process, opinion $i$ reaches mass $1$ by time
$T$.

Let $\calphaup>0$ be the fixed constant in
\Cref{lem:drift analysis for basic}.  Since $\widetilde a_i\le2a\le1/2$,
reaching mass $1$ forces the stopping time
$\tau_i^\uparrow$ to occur.  By
\Cref{lem:drift analysis for basic}, there is a constant $c_0>0$ such that
for every
\[
  T\le c_0/\widetilde a_i
\]
we have
\[
  \Prob[\tau_i^\uparrow\le T]
  \le
  \begin{cases}
    \exp(-\Omega(n\widetilde a_i^2)),
      & \text{for 3-Majority},\\
    \exp(-\Omega(n\widetilde a_i)),
      & \text{for 2-Choices}.
  \end{cases}
\]
Choose $T=c/(2a)$ with $c\le c_0$.  Since $\widetilde a_i\le2a$, this time is
at most $c_0/\widetilde a_i$.

For 3-Majority set
\[
  a_\star=A\sqrt{\frac{\log n}{n}}.
\]
Then $\widetilde a_i\ge a_\star$, and the failure probability for this fixed
$i$ is at most $n^{-\Omega(A^2)}$.  For 2-Choices set
\[
  a_\star=A\frac{\log n}{n},
\]
and the failure probability for this fixed $i$ is at most
$n^{-\Omega(A)}$.  Taking $A$ large enough and union-bounding over at most $n$
initially present opinions gives probability at most $n^{-3}$ that any opinion
reaches consensus before time $T$.  The displayed lower bounds follow from
$T=\Theta(1/\max\{m_0,a_\star\})$.
\end{proof}


\section{Analysis: Plurality Consensus}
In this section we prove \Cref{thm:plurality-informal}. In the next lemma, we show that the bias between the plurality opinion and any other opinion is preserved until the balancing time $\tau_{\mathrm b}$. For 2-Choices, we show that the scaled bias $\eta_t(i,j)$ is preserved until the balancing time $\tau_{\mathrm b}$, which is a stronger condition.
\begin{lemma}[Plurality bias up to the balancing time]
\label{lem:plurality-before-balance}
Assume that
\[
  \alpha_0(1)>\alpha_0(2)\ge\cdots\ge\alpha_0(k),
  \qquad
  m_0=\alpha_0(1),
  \qquad
  \bias_0=\alpha_0(1)-\alpha_0(2).
\]
Under the assumptions of \Cref{thm:plurality-informal}, with probability at
least $1-n^{-4}$, at the time $s=\tau_{\mathrm b}$ given by
\Cref{lem:quick-balance}, opinion $1$ is still the plurality.  Moreover, for
every $j\ne 1$, in 3-Majority,
\[
  \delta_s(1,j)\ge c\bias_0
\]
and, in 2-Choices,
\[
  \frac{\delta_s(1,j)^2}{m_s}
  \ge
  c\frac{\bias_0^2}{m_0},
\]
for a universal constant $c>0$.
\end{lemma}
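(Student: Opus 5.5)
We follow the same scheme as the proof of \Cref{lem:quick-balance}: drift from the conditional expectation identity, concentration via the Freedman inequality (\Cref{cor:Freedman}), and a union bound. First we condition on the event of \Cref{lem:quick-balance}, which holds with probability at least $1-n^{-5}$. On this event $s=\tau_{\mathrm b}\le Cm_0^{-1}$ and $m_t\ge c\,m_0$ for all $t\le s$. We also check that the hypotheses of \Cref{thm:plurality-informal} imply those of \Cref{lem:quick-balance}, namely $nm_0^2\ge C\log n$ and $nm_0\ge C(\log n)^2$ respectively.

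Next, fix $j\ne1$. Since the expectation identity $\E_u[\alpha_{u+1}(i)]=\alpha_u(i)(1+\alpha_u(i)-\gamma_u)$ holds for each opinion, subtracting gives
\[
  \E_u[\delta_{u+1}(1,j)]=\delta_u(1,j)\bigl(1+\alpha_u(1)+\alpha_u(j)-\gamma_u\bigr).
\]
For $u<\tau_{\mathrm b}$ the last factor is at least $1$ whenever $\delta_u\ge0$, because $\alpha_u(1)+\alpha_u(j)\ge\alpha_u(1)^2+\ldots$; more crudely, $\gamma_u<m_u/2$ and we expect $\alpha_u(1)=m_u$. So $\delta$ is a nonnegative-drift process as long as opinion $1$ remains the plurality. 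We therefore stop at $\sigma=\min\{\tau_{\mathrm b},\ \inf\{u:\delta_u(1,j)<\bias_0/2\}\}$. Before $\sigma$ the drift of $\delta$ is nonnegative. The martingale part $M_q=\sum_{u<q\wedge\sigma}(\delta_{u+1}-\E_u\delta_{u+1})$ has increments with a Bernstein condition, with $\bounded=O(1/n)$ and conditional variance $O((\alpha_u(1)+\alpha_u(j))/n)=O(m_0/n)$ for 3-Majority. For 2-Choices the variance is $O(m_u(\alpha_u(1)+\alpha_u(j))/n)$. These bounds are obtained by applying \Cref{lem:Bernstein condition for sync processes} to the difference, or by bounding the variance of a difference by twice the sum. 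Since $\alpha$ and $\delta$ do not grow much within one dyadic phase, the horizon is $O(m_0^{-1})$ rounds. Freedman then gives
\[
  \Pr\bigl[\exists q\le Cm_0^{-1}: M_q\le-\bias_0/2\bigr]\le\exp\!\Bigl(-\Omega\bigl(\bias_0^2/(m_0^{-1}\cdot m_0/n+\bias_0/n)\bigr)\Bigr)=\exp(-\Omega(n\bias_0^2)),
\]
which is $n^{-\omega(1)}$ under $\bias_0=\omega(\sqrt{\log n/n})$. A union bound over $j$ yields $\delta_s(1,j)\ge\bias_0/2$ for all $j$, so opinion $1$ remains the plurality and $\alpha_u(1)=m_u$ throughout. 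Within each dyadic phase of \Cref{lem:quick-balance} we actually use the variance bound $O(x/n)$ over $O(1/x)$ rounds. Summing over phases keeps the total variance $O(\log(1/m_0)/n)$; alternatively, we apply Freedman per phase with the threshold $\bias_0/2^{q+2}$ relative to the phase's current bias, which is at least $\bias_0/2$.

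For 2-Choices the target quantity is $\eta_u^2=\delta_u^2/m_u$, and this is the main obstacle. The plan is to work multiplicatively, phase by phase. In a phase with $m_u\in[x/2,2x)$ we show that $\delta$ does not drop below roughly $\delta_r(1-o(1))$ plus its drift. The key is to compare relative growth rates. The relative drift of $\delta_u(1,j)$ is $\alpha_u(1)+\alpha_u(j)-\gamma_u\ge m_u-\gamma_u$, which is exactly the relative drift of $m_u=\alpha_u(1)$. Hence $\delta_u/\alpha_u(1)$ is roughly a supermartingale from below, and in expectation $\delta_u\ge\delta_r\cdot\alpha_u(1)/\alpha_r(1)$. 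Consequently $\delta_u^2/m_u\ge(\delta_r^2/m_r)\cdot(m_u/m_r)\ge\delta_r^2/m_r$ up to fluctuations. The fluctuations of $\delta$ are controlled as before. In a phase at scale $x$ the variance is $O(x\cdot x/n)$ over $O(1/x)$ rounds, i.e.\ $O(x/n)$. We need deviations smaller than $\delta_r/2$, where $\delta_r\gtrsim\bias_0\sqrt{x/m_0}$ by induction on the $\eta$ bound. The required tail is then $\exp(-\Omega(\bias_0^2 x/m_0\cdot n/x))=\exp(-\Omega(n\bias_0^2/m_0))$, which is small precisely under $\bias_0=\omega(\sqrt{m_0\log n/n})$. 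The linear term $\bounded h$ is handled by $m_0\gg(\log n)^2/n$. Fluctuations of $m_u$ itself are controlled as in \Cref{lem:quick-balance}. Chaining the per-phase constant losses naively would compound over $O(\log(1/m_0))$ phases. To avoid this we apply Freedman to the deviation of $\delta$ from its multiplicative drift using geometrically weighted thresholds $\delta_r/8\cdot 2^{-q/2}$ per phase, or we use the exact multiplicative gain $m_u/m_r\ge2$ at each phase boundary to absorb a constant-factor loss. We expect the latter to work, giving a universal $c$. Finally, a union bound over the at most $n$ opinions $j$ and the $O(\log n)$ phases yields probability at least $1-n^{-4}$.
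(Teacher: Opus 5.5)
\textbf{Gap in the 3-Majority part.} Your displayed Freedman bound uses a per-round variance $O(m_0/n)$. Before $\tau_{\mathrm b}$, however, $\alpha_u(1)=m_u$ can grow from $m_0$ to $\Theta(1)$. For example, take two leading opinions of density about $n^{-1/3}$ and all other opinions of density $O(1/n)$; then balance requires $m_u\gtrsim 1/4$. Since $m_{u+1}\approx m_u(1+\Theta(m_u))$ before balancing, $\sum_{u<\tau_{\mathrm b}}m_u=\Theta(\log(m_{\tau_{\mathrm b}}/m_0))$. So the accumulated variance of your additive martingale $M_q$ really is $\Theta(\log(1/m_0)/n)$, as your phase count shows. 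Freedman then gives only $\exp(-\Omega(n\bias_0^2/\log(1/m_0)))$. For $m_0=n^{-1/3}$ and, say, $n\bias_0^2=\log n\log\log n$ (allowed by the hypothesis $\bias_0=\omega(\sqrt{\log n/n})$), this is only $(\log n)^{-\Theta(1)}$, far from $n^{-4}$.

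Your alternative goes the wrong way. Each phase carries variance $\Theta(1/n)$, so thresholds $\bias_0/2^{q+2}$ give phase tails $\exp(-\Omega(4^{-q}n\bias_0^2))$. No argument that uses only nonnegativity of the drift can close. You must exploit that $\delta_t(1,j)$ itself grows multiplicatively, so that the later, larger fluctuations are compared against a larger bias.

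The paper does this with two ingredients. First, a global stopping time $\tau_\Delta$ over all $\ell\ne1$; unlike your $\sigma$ for a fixed $j$, this also guarantees that opinion $1$ is the plurality. Second, the predictable normaliser $G_{t+1}=G_t(1+a_t)$ with $a_t=\alpha_t(1)+\alpha_t(j)-\gamma_t>\alpha_t(1)/2$. Then $Z_t=\delta_{t\wedge\sigma}(1,j)/G_t$ is a martingale whose variance proxies satisfy $v_t\le 8a_t/(nG_{t+1}^2)\le\frac8n\bigl(G_t^{-1}-G_{t+1}^{-1}\bigr)$. These telescope to at most $8/n$ whatever the horizon, and \Cref{cor:Freedman-predictable} yields $\exp(-\Omega(n\bias_0^2))$. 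Since $G_t\ge1$, a drop of $Z$ by less than $\bias_0/2$ keeps $\delta_t(1,j)>\bias_0/2$.

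\textbf{The 2-Choices part.} Your key observation, that the relative drift of $\delta_t(1,j)$ dominates that of $\alpha_t(1)$, is exactly the one the paper uses. The phase bookkeeping, however, does not close as written.
\begin{itemize}
\item Allowing a loss of $\delta_r/2$ per phase costs a factor $4$ in $\eta^2$, while doubling $m$ gains only about $2$. Your lower bound on $\eta^2$ then degrades by a constant factor per phase, i.e.\ by $\Theta(m_0)$ overall, rather than a universal constant.
\item Your first remedy, thresholds $2^{-q/2}\delta_r/8$, degrades the phase-$q$ tail to $\exp(-\Omega(2^{-q}n\bias_0^2/m_0))$. In the last phases this is merely $\exp(-\Omega(n\bias_0^2))$.
\item Your second remedy can be made rigorous only with three additions: a small constant threshold $\varepsilon\delta_r$; an \emph{upper} multiplicative concentration of $m$ relative to its own drift in every phase (\Cref{lem:quick-balance} controls only lower deviations); and a genuine replacement for ``$\delta_u\ge\delta_r\alpha_u(1)/\alpha_r(1)$ in expectation'', since a ratio of two processes need not be a submartingale.
\end{itemize}

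The paper avoids phases entirely. It normalises $\alpha_t(1)$ by its own drift factor $H_t$ and shows $A_t=\alpha_t(1)/H_t<2m_0$ w.h.p. Using $G_t\ge H_t$, this gives $\alpha_t(1)<2m_0G_t$. Hence $D_t=\delta_{t\wedge\sigma}(1,j)/G_t$ has variance proxy at most $20m_0^2/n$ per round, i.e.\ $O(m_0/n)$ over $S=O(m_0^{-1})$ rounds. Finally, $\eta_t(1,j)\le\bias_0/(2\sqrt{m_0})$ forces $D_t\le\bias_0\sqrt{H_t}/(\sqrt2\,G_t)\le\bias_0/\sqrt2$. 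This is a drop of $(1-1/\sqrt2)\bias_0$, which Freedman excludes except with probability $\exp(-\Omega(n\bias_0^2/m_0))$.
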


\begin{proof}
Write $B=\bias_0$.  Let $C_{\mathrm b}$ and $c_{\mathrm b}$ be the
constants from \Cref{lem:quick-balance}, put
\[
  S=\left\lceil C_{\mathrm b}m_0^{-1}\right\rceil,
  \qquad
  \rho=c/2,
\]
and let $\mathcal E_{\mathrm b}$ be the successful event supplied by
\Cref{lem:quick-balance}.  Explicitly, on this event,
\[
  \tau_{\mathrm b}\le S,
  \qquad
  m_t\ge c \, m_0
  \quad\text{for every }0\le t\le\tau_{\mathrm b}.
\]
Moreover, $\Pr[\mathcal E_{\mathrm b}^c]\le n^{-5}$.

\textbf{3-Majority.} Define the stopping times
\[
  \tau_{\Delta}
  =
  \inf\{t\ge0:\exists \ell\ne1
  \text{ such that }\delta_t(1,\ell)\le B/2\},
  \qquad
  \tau_m=\inf\{t\ge0:\alpha_t(1)\le\rho m_0\}.
\]
We prove that
\begin{equation}
\label{eq:delta-loss-before-balance}
  \Pr[\tau_{\Delta}\le\tau_{\mathrm b}]\le n^{-4}.
\end{equation}
The 3-Majority part follows immediately from \Cref{eq:delta-loss-before-balance},
since if
$s=\tau_{\mathrm b}<\tau_{\Delta}$, then
$\delta_s(1,j)>B/2$ for every $j\ne1$.

Fix $j\ne1$ and set
\[
  \sigma=\min\{\tau_{\mathrm b},\tau_{\Delta},\tau_m,S\}.
\]
For $t<\sigma$, opinion $1$ is the unique plurality, and hence
$\alpha_t(1)=m_t$.  Moreover, $t<\tau_{\mathrm b}$ implies
$\gamma_t<m_t/2$, and so
\[
  a_t:=\alpha_t(1)+\alpha_t(j)-\gamma_t
  \ge \alpha_t(1)-\gamma_t
  > \alpha_t(1)/2.
\]
For $t\ge\sigma$ set $a_t=0$.  Let $G_0=1$ and
$G_{t+1}=G_t(1+a_t)$.  Finally define
\[
  Z_t=\frac{\delta_{t\wedge\sigma}(1,j)}{G_t}.
\]
By \Cref{item:delta} of \Cref{lem:basic inequality}, for $t<\sigma$,
\[
  \E_t[\delta_{t+1}(1,j)]
  =
  \delta_t(1,j)(1+a_t),
\]
while for $t\ge\sigma$ both the numerator and $G_t$ are stopped.  Hence
$(Z_t)_{t\ge0}$ is a martingale.

We now bound the lower tail of this martingale.  For $t<\sigma$,
\[
  Z_t-Z_{t+1}
  =
  \frac{\E_t[\delta_{t+1}(1,j)]-\delta_{t+1}(1,j)}
       {G_{t+1}} .
\]
By \Cref{item:BC for delta} of
\Cref{lem:Bernstein condition for sync processes},
$\E_t[\delta_{t+1}(1,j)]-\delta_{t+1}(1,j)$ satisfies a one-sided Bernstein
condition with parameters
\[
  \frac{2}{n}
  \qquad\text{and}\qquad
  \frac{2(\alpha_t(1)+\alpha_t(j))}{n}.
\]
Since $G_{t+1}$ is $\calF_t$-measurable, scaling by $1/G_{t+1}$ preserves the
one-sided Bernstein condition with both parameters scaled in the standard way.
Thus the increment $Z_t-Z_{t+1}$ satisfies a one-sided Bernstein condition with
boundedness parameter at most $2/n$ and predictable variance proxy
\[
  v_t
  =
  \indicator_{\{t<\sigma\}}
  \frac{2(\alpha_t(1)+\alpha_t(j))}{nG_{t+1}^2}
  \le
  \frac{8a_t}{nG_{t+1}^2},
\]
where the last inequality uses $a_t>m_t/2$ and
$\alpha_t(1)+\alpha_t(j)\le2m_t$ on $\{t<\sigma\}$, while both sides are zero
on $\{t\ge\sigma\}$.  The variance proxies telescope:
\[
  \sum_{t<S}v_t
  \le
  \frac{8}{n}\sum_{t<S}\frac{a_t}{G_{t+1}^2}
  \le
  \frac{8}{n}\sum_{t<S}
  \left(\frac1{G_t}-\frac1{G_{t+1}}\right)
  \le \frac{8}{n}.
\]
Applying \Cref{cor:Freedman-predictable} to the martingale
$X_t=Z_0-Z_t$ with $h=B/2$, $\bounded=2/n$, and $v=8/n$ gives
\begin{equation}
\label{eq:delta-normalized-freedman}
  \Pr\!\left[\exists t\le S:\ Z_0-Z_t\ge B/2\right]
  \le
  \exp(-\Omega(nB^2)).
\end{equation}

If $\tau_{\Delta}\le\min\{\tau_{\mathrm b},\tau_m,S\}$ and opinion $j$ is one
of the opinions attaining the minimum in the definition of $\tau_{\Delta}$,
then, at time $\sigma=\tau_{\Delta}$,
\[
  Z_0-Z_{\sigma}
  =
  \delta_0(1,j)-\frac{\delta_{\sigma}(1,j)}{G_{\sigma}}
  \ge
  B-\frac{B}{2}
  =
  B/2,
\]
because $\delta_0(1,j)\ge B$ and $G_{\sigma}\ge1$.  Taking a union bound over
the at most $n-1$ competing opinions and using
\Cref{eq:delta-normalized-freedman}, we get
\[
  \Pr[\tau_{\Delta}\le\min\{\tau_{\mathrm b},\tau_m,S\}]
  \le n\exp(-\Omega(nB^2)).
\]

On $\mathcal E_{\mathrm b}$, if $t<\tau_{\Delta}\wedge\tau_{\mathrm b}$ then
opinion $1$ is the plurality, and therefore
\[
  \alpha_t(1)=m_t\ge c \, m_0>\rho m_0.
\]
Thus $\tau_m$ cannot occur before
$\tau_{\Delta}\wedge\tau_{\mathrm b}$ on $\mathcal E_{\mathrm b}$, and
$\tau_{\mathrm b}\le S$ there.  Consequently,
\[
  \Pr[\tau_{\Delta}\le\tau_{\mathrm b}]
  \le
  \Pr[\mathcal E_{\mathrm b}^c]
  +
  \Pr[\tau_{\Delta}\le\min\{\tau_{\mathrm b},\tau_m,S\}]
  \le
  n^{-5}+n\exp(-\Omega(nB^2)).
\]
Since the 3-Majority plurality assumption gives
$B=\omega(\sqrt{\log n/n})$, the last display is at most $n^{-4}$ for all
sufficiently large $n$.  This proves the 3-Majority part.

\textbf{2-Choices.} Put
\[
  q=\frac{B}{2\sqrt{m_0}}
\]
and define the global scaled-bias loss time
\[
  \tau_{\eta}
  =
  \inf\{t\ge0:\exists \ell\ne1
  \text{ such that }\eta_t(1,\ell)\le q\}.
\]
Let
\[
  \sigma_0=\min\{\tau_{\mathrm b},\tau_{\eta},S\}.
\]
For $t<\sigma_0$, all scaled biases $\eta_t(1,\ell)$ are positive, and hence
opinion $1$ is the unique plurality: $\alpha_t(1)=m_t$.  It also holds that
$\gamma_t<\alpha_t(1)/2$.  Define
\[
  b_t=
  \begin{cases}
    \alpha_t(1)-\gamma_t, & t<\sigma_0,\\
    0, & t\ge\sigma_0,
  \end{cases}
  \qquad
  H_0=1,\qquad H_{t+1}=H_t(1+b_t).
\]
Then
\[
  A_t=\frac{\alpha_{t\wedge\sigma_0}(1)}{H_t}
\]
is a martingale by \Cref{item:alpha} of \Cref{lem:basic inequality}.  Let
\[
  \tau_A=\inf\{t\ge0:\ A_t\ge2m_0\}.
\]
We first show that this stopping time is unlikely before
the scaled bias is lost.  Let $\sigma_A=\min\{\sigma_0,\tau_A\}$ and stop
$A_t$ at $\sigma_A$.  For $t<\sigma_A$,
\[
  A_t<2m_0,\qquad
  \gamma_t<\alpha_t(1)/2,
\]
and so the 2-Choices variance proxy from \Cref{item:BC for alpha} of
\Cref{lem:Bernstein condition for sync processes}, after scaling by the
$\calF_t$-measurable factor $1/H_{t+1}$, is at most
\[
  u_t
  =
  \indicator_{\{t<\sigma_A\}}
  \frac{\alpha_t(1)(\alpha_t(1)+\gamma_t)}{nH_{t+1}^2}
  \le
  \indicator_{\{t<\sigma_A\}}\frac{6m_0^2}{n}.
\]
Since $S=O(m_0^{-1})$, for a universal constant $C_A$,
\[
  \sum_{t<S}u_t\le C_A\frac{m_0}{n}.
\]
Moreover, the corresponding boundedness parameter is at most $1/n$.  Notice
that $A_0=m_0$.  Hence, on the event $\{\tau_A\le\sigma_0\}$, we have
$\sigma_A=\tau_A\le S$ and
\[
  A_{\tau_A\wedge\sigma_A}-A_0
  =
  A_{\tau_A}-m_0
  \ge m_0,
\]
because $A_{\tau_A}\ge2m_0$ by definition of $\tau_A$.  Therefore
\[
  \{\tau_A\le\sigma_0\}
  \subseteq
  \left\{\exists t\le S:\ A_{t\wedge\sigma_A}-A_0\ge m_0\right\}.
\]
Applying \Cref{cor:Freedman-predictable} to the stopped martingale
$A_{t\wedge\sigma_A}-A_0$ with $h=m_0$ gives
\begin{equation}
\label{eq:alpha-normalized-upper-tail}
  \Pr[\tau_A\le\sigma_0]
  \le
  \exp(-\Omega(nm_0)).
\end{equation}

Now fix $j\ne1$ and define
\[
  \sigma=\min\{\tau_{\mathrm b},\tau_{\eta},\tau_A,S\}.
\]
Set
\[
  a_t=
  \begin{cases}
    \alpha_t(1)+\alpha_t(j)-\gamma_t, & t<\sigma,\\
    0, & t\ge\sigma,
  \end{cases}
  \qquad
  G_0=1,\qquad G_{t+1}=G_t(1+a_t).
\]
Since $a_t=b_t+\alpha_t(j)\ge b_t$ for $t<\sigma$, we have $G_t\ge H_t$ for
every $t\le\sigma$.  Also,
\[
  D_t=\frac{\delta_{t\wedge\sigma}(1,j)}{G_t}
\]
is a martingale, again by \Cref{item:delta} of
\Cref{lem:basic inequality}.

For $t<\sigma$, opinion $1$ is the plurality and the configuration is
unbalanced, so
\[
  \alpha_t(1)=m_t,\qquad
  \gamma_t<\alpha_t(1)/2,\qquad
  \alpha_t(j)\le\alpha_t(1).
\]
Since $t<\tau_A$, we also have
\begin{equation}
\label{eq:alpha-bound-by-normalizers}
  \alpha_t(1)=A_tH_t<2m_0H_t\le2m_0G_t .
\end{equation}
The 2-Choices variance proxy from \Cref{item:BC for delta}, after scaling by
$1/G_{t+1}$, is
\[
  w_t
  =
  \indicator_{\{t<\sigma\}}
  \frac{(\alpha_t(1)+\alpha_t(j))
  (\alpha_t(1)+\alpha_t(j)+\gamma_t)}{nG_{t+1}^2}.
\]
Using the preceding bounds and \Cref{eq:alpha-bound-by-normalizers},
\[
  w_t
  \le
  \indicator_{\{t<\sigma\}}\frac{20m_0^2}{n}.
\]
Since $S=O(m_0^{-1})$, we obtain that, for a universal constant $C_D$,
\[
  \sum_{t<S}w_t\le C_D\frac{m_0}{n}.
\]
The boundedness parameter is at most $2/n$.  Let
$c_\eta=1-1/\sqrt2$.  Applying \Cref{cor:Freedman-predictable} to the
martingale $D_0-D_t$, with $h=c_\eta B$ and $v=C_Dm_0/n$, yields
\begin{equation}
\label{eq:eta-normalized-freedman}
  \Pr\!\left[\exists t\le S:\ D_0-D_t\ge c_\eta B\right]
  \le
  \exp\!\left(-\Omega\!\left(\frac{nB^2}{m_0}\right)\right),
\end{equation}
where we used $B\le m_0$.

If $\tau_\eta<\min\{\tau_{\mathrm b},\tau_A,S\}$ and opinion $j$ realises the
minimum in the definition of $\tau_\eta$, then either
$\delta_{\tau_\eta}(1,j)<0$, in which case $D_{\tau_\eta}<0$, or
$\delta_{\tau_\eta}(1,j)\ge0$, in which case
\[
  \delta_{\tau_\eta}(1,j)
  \le
  q\sqrt{\alpha_{\tau_\eta}(1)}
  \le
  \frac{B}{2\sqrt{m_0}}\sqrt{2m_0H_{\tau_\eta}}
  =
  \frac{B}{\sqrt2}\sqrt{H_{\tau_\eta}}.
\]
Since $G_{\tau_\eta}\ge H_{\tau_\eta}\ge1$, in both cases
\[
  D_{\tau_\eta}
  =
  \frac{\delta_{\tau_\eta}(1,j)}{G_{\tau_\eta}}
  \le
  \frac{B}{\sqrt2}.
\]
As $D_0=\delta_0(1,j)\ge B$, the event
$\{\tau_\eta<\min\{\tau_{\mathrm b},\tau_A,S\}\}$ forces
$D_0-D_{\tau_\eta}\ge c_\eta B$ for one of the at most $n-1$ choices of $j$.
By \Cref{eq:eta-normalized-freedman} and a union bound,
\begin{equation}
\label{eq:eta-loss-before-stops}
  \Pr[\tau_\eta<\min\{\tau_{\mathrm b},\tau_A,S\}]
  \le
  n\exp\!\left(-\Omega\!\left(\frac{nB^2}{m_0}\right)\right).
\end{equation}

On $\mathcal E_{\mathrm b}$, $\tau_{\mathrm b}\le S$.  Combining this with
\Cref{eq:alpha-normalized-upper-tail,eq:eta-loss-before-stops},
\[
  \Pr[\tau_\eta\le\tau_{\mathrm b}]
  \le
  n^{-5}
  +\exp(-\Omega(nm_0))
  +n\exp\!\left(-\Omega\!\left(\frac{nB^2}{m_0}\right)\right).
\]
The 2-Choices assumptions in \Cref{thm:plurality-informal} give
$nm_0=\omega((\log n)^2)$ and $nB^2/m_0=\omega(\log n)$, so the last display
is at most $n^{-4}$ for all sufficiently large $n$.  Thus, with probability at
least $1-n^{-4}$, $s=\tau_{\mathrm b}<\tau_\eta$.  At that time, for every
$j\ne1$,
\[
  \eta_s(1,j)>\frac{B}{2\sqrt{m_0}},
\]
opinion $1$ is still the plurality, and therefore
\[
  \frac{\delta_s(1,j)^2}{m_s}
  =
  \eta_s(1,j)^2
  \ge
  \frac{B^2}{4m_0}.
\]
This proves the 2-Choices part, and the lemma follows with a universal
constant $c>0$.
\end{proof}
Now we can complete the proof of \Cref{thm:plurality-informal} by applying the Shimizu--Shiraga plurality-consensus theorem from the random time $s=\tau_{\mathrm b}$.
Conditioned on the current configuration, \Cref{lem:plurality-before-balance} ensures that its conditions are satisfied.
\begin{proof}[Proof of \Cref{thm:plurality-informal}]
At the time $s=\tau_{\mathrm b}$ above, the configuration is balanced and
$m_s=\Omega(m_0)$.  In 3-Majority,
\[
  \delta_s(1,j)\ge c\bias_0
  =
  \omega\!\left(\sqrt{\frac{\log n}{n}}\right)
  \qquad\text{for every }j\ne 1.
\]
In 2-Choices, the normalized bound from
\Cref{lem:plurality-before-balance} gives
\[
  \delta_s(1,j)
  \ge
  \omega\!\left(\sqrt{\frac{m_s\log n}{n}}\right)
  \qquad\text{for every }j\ne 1.
\]
Hence the plurality-consensus theorem of Shimizu and Shiraga can be applied
from the random balanced configuration at time $s$, conditioned on the current
state.  We use the standard high-probability form of their theorem with the
constants in the assumptions chosen so that the conditional failure
probability is at most $n^{-4}$.  It gives consensus on opinion $1$ within
$O(m_0^{-1}\log n)$ additional rounds.  Combining this with the $n^{-4}$
failure probability in \Cref{lem:plurality-before-balance} and with
\Cref{lem:quick-balance}, and then increasing constants if necessary, the
total failure probability is at most $n^{-2}$.  Since \Cref{lem:quick-balance}
gives $s=O(m_0^{-1})$, this proves \Cref{thm:plurality-informal}.
\end{proof}
\section{Conclusion and future directions}
\label{sec:conclusion}

We identified the maximum initial opinion density
\(m_0=\|\alphavec^{(0)}\|_\infty\) as the parameter governing the
consensus time of both \threemaj and \twochoices on the complete graph.  In
particular, \threemaj reaches consensus in
\(\tilde{\Theta}(\min\{m_0^{-1},\sqrt n\})\) rounds, while \twochoices reaches
consensus in \(\tilde{\Theta}(m_0^{-1})\) rounds.  Thus the convergence time is
controlled by a local feature of the initial configuration, rather than by
global parameters such as the number of opinions or the squared \(\ell_2\)-norm
\(\gamma_0\).

This gives substantial improvements over the bounds of Shimizu and
Shiraga~\cite{Shimizu2025} for unbalanced configurations, where \(m_0\) can be
much larger than \(\gamma_0\).  For \twochoices, the improvement can be as large
as a factor \(\sqrt n\); for \threemaj, it can be as large as a factor
\(n^{1/4}\).

It would be natural to investigate whether the same infinity-norm
parameterisation extends to asynchronous \threemaj and \twochoices.  Since a
single asynchronous activation has the same local transition probabilities as
one vertex update in the synchronous process, one possible route is to adapt the
existing \(\ell_2\)-norm analysis in the asynchronous setting.  If such an
analogue is available, our reduction from unbalanced to balanced configurations
should yield the corresponding asynchronous bounds up to the expected factor
\(n\) in the time scale.

This work also exposes configurations in which \threemaj is slower than the
Undecided-State, giving a first explicit gap between these two dynamics.  More
precisely, \threemaj can be slower by a factor \(m_0/\gamma_0\), which becomes
substantial for unbalanced configurations.  It would be interesting to build on
the analyses of~\cite{Becchetti2016,cooper2026undecidedstatedynamicsopinions}
to determine whether the consensus time of the \undecided is
\(\tilde{\Theta}(\gamma_0/m_0^2)\) for arbitrary numbers of opinions, which
would extend this gap beyond the currently known range.

\section{Acknowledgements}
The author thanks Francesco d'Amore and Emanuele Natale for helpful discussions and feedback on this work.
This work has been supported by the AID INRIA-DGA project n°2023000872 ``BioSwarm''.


\section{Tools}
We shall use the following results concerning (super)martingales.
\begin{theorem}[Freedman's inequality; \cite{Freedman1975}]\label{thm:Freedman}
  Let $(M_t)_{t\ge0}$ be a martingale with $M_0=0$ and increments
  $\Delta_t=M_t-M_{t-1}\le B$ almost surely.  Let
  $V_t=\sum_{s=1}^t\E[\Delta_s^2\mid\mathcal{F}_{s-1}]$.  Then, for every
  $\lambda,v>0$,
  \[
    \Pr\!\left[\exists t:\ M_t\ge \lambda\ \text{and}\ V_t\le v\right]
    \le \exp\!\left(-\frac{\lambda^2}{2(v+B\lambda/3)}\right).
  \]
\end{theorem}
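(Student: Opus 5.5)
The plan is to reduce Freedman's inequality to the exponential supermartingale method. Fix $\lambda,v>0$ and a parameter $\theta>0$ to be optimised later. Set
\[
  \psi(\theta)=\frac{e^{\theta B}-1-\theta B}{B^2}.
\]
The key analytic fact is that $x\mapsto (e^{\theta x}-1-\theta x)/x^2$ is nondecreasing on $\mathbb R$, extended by continuity with value $\theta^2/2$ at $0$. Because $\Delta_s\le B$, this gives the pointwise bound
\[
  e^{\theta\Delta_s}\le 1+\theta\Delta_s+\psi(\theta)\Delta_s^2 .
\]
Taking conditional expectations and using $\E[\Delta_s\mid\mathcal F_{s-1}]=0$ yields
\[
  \E[e^{\theta\Delta_s}\mid\mathcal F_{s-1}]\le 1+\psi(\theta)\E[\Delta_s^2\mid\mathcal F_{s-1}]\le \exp\bigl(\psi(\theta)\E[\Delta_s^2\mid\mathcal F_{s-1}]\bigr).
\]
Hence
\[
  Y_t=\exp\bigl(\theta M_t-\psi(\theta)V_t\bigr)
\]
is a nonnegative supermartingale with $Y_0=1$. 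This uses that $V_t$ is predictable, so the factor $e^{-\psi(\theta)V_s}$ can be pulled out of the conditional expectation.

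Next, define the stopping time $\tau=\inf\{t: M_t\ge\lambda \text{ and } V_t\le v\}$. On $\{\tau<\infty\}$ we have
\[
  Y_\tau\ge \exp\bigl(\theta\lambda-\psi(\theta)v\bigr).
\]
Optional stopping for the nonnegative supermartingale (apply it to $\tau\wedge N$ and use Fatou as $N\to\infty$) gives $\E[Y_\tau\mathbf 1_{\tau<\infty}]\le 1$. By Markov,
\[
  \Pr[\tau<\infty]\le \exp\bigl(-\theta\lambda+\psi(\theta)v\bigr).
\]

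Finally, optimise in $\theta$. Use the elementary bound
\[
  e^{y}-1-y\le \frac{y^2}{2(1-y/3)}\quad\text{for } 0\le y<3,
\]
which follows by comparing Taylor coefficients, since $k!\ge 2\cdot 3^{k-2}$. With $y=\theta B$ this gives $\psi(\theta)\le \theta^2/\bigl(2(1-\theta B/3)\bigr)$. Choosing
\[
  \theta=\frac{\lambda}{v+B\lambda/3}
\]
(which satisfies $\theta B<3$), a direct computation makes the exponent at most $-\lambda^2/\bigl(2(v+B\lambda/3)\bigr)$, as claimed. The degenerate case $B\le 0$ is trivial, or can be handled by replacing $B$ by any positive upper bound.

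The main obstacle is the monotonicity of $(e^{\theta x}-1-\theta x)/x^2$ for negative $x$. This is what allows a one-sided bound $\Delta_s\le B$ with no lower bound on the increments. I would prove it by writing the function as $\theta^2\int_0^1(1-u)e^{\theta xu}\,du$, which is visibly nondecreasing in $x$. Everything else is routine.
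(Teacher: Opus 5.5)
Your proof is correct. The integral representation $\theta^2\int_0^1(1-u)e^{\theta x u}\,du$ gives the needed monotonicity for negative $x$. The process $\exp(\theta M_t-\psi(\theta)V_t)$ is a nonnegative supermartingale because $V_t$ is predictable. The stopping-time and Fatou step handles the infinite horizon, and the choice $\theta=\lambda/(v+B\lambda/3)$ yields exactly the stated exponent.

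The paper gives no proof of this theorem; it is quoted from Freedman. Your argument does, however, follow the same template the paper uses to prove \Cref{cor:Freedman-predictable}. Both use an exponential supermartingale compensated by a multiple of the accumulated conditional variance, then a maximal inequality (Ville's inequality there, optional stopping plus Markov here), then the same parameter choice.

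You add two things the corollary does not have. First, you derive the conditional moment-generating-function bound from the one-sided condition $\Delta_s\le B$ alone, via the Bennett function and the comparison $k!\ge 2\cdot 3^{k-2}$. This amounts to showing that mean-zero, bounded-above increments satisfy the one-sided Bernstein condition that the corollary simply assumes. Second, you control the event $\{V_t\le v\}$ through the stopping time rather than requiring an almost-sure bound on the summed variance. That is what makes your result the genuine random-variance form of Freedman's inequality.
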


\begin{corollary}[Corollary 3.8 in \cite{Shimizu2025}] 
      \label{cor:Freedman}
        Let $(X_t)_{t\in \Nat_0}$ be a supermartingale associated with the natural filtration $(\calF_t)_{t\in \Nat_0}$.
        Suppose that, for every $t\ge 1$, the difference $X_t - X_{t-1}$ conditioned on $\calF_{t-1}$ satisfies one-sided $(\bounded,\variance)$-Bernstein condition.
        Then, for any $h > 0$, we have
      \begin{align*}
         \Pr\sbra*{{}^{\exists} t\le T,\,X_t-X_0\geq h} \leq \exp\rbra*{-\frac{h^2/2}{T\variance + (h\bounded )/3}}.
      \end{align*}
    \end{corollary}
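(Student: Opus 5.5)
The plan is the standard exponential-supermartingale (Chernoff--Ville) argument, run directly on the Bernstein moment-generating-function bound. We do not route through \Cref{thm:Freedman}, because here the increments need not be bounded.

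\textbf{Step 1: the mgf bound.} I take the one-sided $(\bounded,\variance)$-Bernstein condition in its usual form: for every $\lambda\in[0,3/\bounded)$,
\[
  \E\!\left[e^{\lambda(X_t-X_{t-1})}\mid\calF_{t-1}\right]\le \exp\!\left(\frac{\lambda^2\variance/2}{1-\bounded\lambda/3}\right).
\]
If the condition is instead phrased as a one-sided bound by $\bounded$ plus a conditional second-moment bound by $\variance$, then a preliminary step is needed. Write $e^{x}\le 1+x+x^2\psi(x)$ with $\psi$ increasing, and use the non-positive conditional drift coming from the supermartingale property. The expansion $\psi(\bounded\lambda)\le \tfrac12\sum_{k\ge0}(\bounded\lambda/3)^k$ then recovers exactly the displayed mgf bound.

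\textbf{Step 2: the exponential supermartingale.} Fix $\lambda\in[0,3/\bounded)$ and put $K(\lambda)=\frac{\lambda^2\variance/2}{1-\bounded\lambda/3}$. Define
\[
  Y_t=\exp\bigl(\lambda(X_t-X_0)-tK(\lambda)\bigr).
\]
By Step 1 and the tower property, $(Y_t)$ is a nonnegative supermartingale with $Y_0=1$.

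On the event $\{\exists t\le T:\ X_t-X_0\ge h\}$, the fact that $tK(\lambda)\le TK(\lambda)$ gives $Y_t\ge \exp(\lambda h-TK(\lambda))$ for some $t\le T$. Ville's maximal inequality for nonnegative supermartingales (Doob's inequality applied to the process stopped at the first such $t$) then yields
\[
  \Pr\sbra*{{}^{\exists} t\le T,\,X_t-X_0\ge h}\le \exp\bigl(-\lambda h+TK(\lambda)\bigr).
\]

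\textbf{Step 3: choosing $\lambda$.} Let $D=T\variance+h\bounded/3$ and set $\lambda=h/D$. Since $\lambda<3/\bounded$, this choice is admissible. Moreover $1-\bounded\lambda/3=T\variance/D$, so
\[
  TK(\lambda)=\frac{T h^2\variance}{2D^2}\cdot\frac{D}{T\variance}=\frac{h^2}{2D}.
\]
The exponent therefore equals $-h^2/D+h^2/(2D)=-h^2/(2D)$, which is exactly the claimed bound.

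\textbf{Main obstacle.} The only non-mechanical point is Step 1: making sure the form of the Bernstein condition used by \cite{Shimizu2025} yields the mgf bound with denominator $1-\bounded\lambda/3$. This also accounts for the drift, which the supermartingale property controls in the right direction. The degenerate cases $T\variance=0$ or $\bounded=0$ are handled by letting the parameter tend to $0$ in the final bound.
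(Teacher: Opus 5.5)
Your proof is correct and is essentially the paper's argument. The paper only cites this statement from \cite{Shimizu2025}, but it proves the generalisation \Cref{cor:Freedman-predictable} by the same steps you use: the exponential supermartingale $\exp(\lambda(X_t-X_0)-c_\lambda\sum_{s\le t}\variance_s)$, Ville's inequality, and the choice $\lambda=h/(v+h\bounded/3)$; your proof is its special case $\variance_t\equiv\variance$, $v=T\variance$, with your extra remarks on the second-moment form of the condition and the degenerate case $T\variance=0$ also sound.
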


\begin{theorem}[Ville's inequality; \cite{Ville1939}]
\label{thm:Ville}
Let $(L_t)_{t\ge0}$ be a nonnegative supermartingale.  Then, for every
$T\in\Nat_0$ and every $a>0$,
\[
  \Pr\!\left[\exists t\le T:\ L_t\ge a\right]
  \le \frac{\E[L_0]}{a}.
\]
\end{theorem}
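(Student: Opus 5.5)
The plan is to reduce the maximal statement to a single application of optional stopping at a bounded stopping time, in the spirit of the Doob/Markov maximal inequality. Fix $T\in\Nat_0$ and $a>0$, and let $(\calF_t)$ be the filtration with respect to which $(L_t)$ is a supermartingale. I would define
\[
  \tau=\min\bigl\{\inf\{t\le T:\ L_t\ge a\},\,T\bigr\},
\]
with $\inf\emptyset=\infty$ as in \Cref{sec:model}. Since $\{\tau\le t\}$ is determined by $L_0,\dots,L_t$, this is a stopping time, and it is bounded by $T$.

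First I will show $\E[L_\tau]\le\E[L_0]$. Because $\tau\le T$ is bounded, no integrability issues arise beyond those of $L_0,\dots,L_T$. I would write the telescoping identity
\[
  L_\tau=L_0+\sum_{s=1}^{T}\indicator_{\{\tau\ge s\}}(L_s-L_{s-1}).
\]
The event $\{\tau\ge s\}=\{\tau\le s-1\}^c$ lies in $\calF_{s-1}$. Conditioning each summand on $\calF_{s-1}$ and using $\E[L_s-L_{s-1}\mid\calF_{s-1}]\le0$ shows every term has nonpositive expectation, which gives the inequality. (Equivalently, the stopped process $L_{t\wedge\tau}$ is a supermartingale.)

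Second, I will use nonnegativity. Let $E=\{\exists t\le T: L_t\ge a\}$. On $E$ the infimum is attained at some $t\le T$, so $\tau$ equals that time and $L_\tau\ge a$. Off $E$ we only know $L_\tau\ge0$. Hence
\[
  \E[L_0]\ge\E[L_\tau]\ge\E[L_\tau\indicator_E]\ge a\Pr[E],
\]
and dividing by $a$ gives the claim.

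There is no real obstacle here. The only points needing care are checking that $\tau$ is a genuine bounded stopping time, so that optional stopping applies without uniform integrability, and noting that nonnegativity is exactly what lets us discard the contribution of $E^c$.
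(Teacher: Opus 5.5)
Your proof is correct and complete. The stopping time $\tau$ is bounded by $T$, so the telescoping identity with the $\calF_{s-1}$-measurable indicators $\indicator_{\{\tau\ge s\}}$ gives $\E[L_\tau]\le\E[L_0]$ without any uniform-integrability issue, and nonnegativity lets you drop the contribution of the complement of $E$.

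The paper gives no proof of its own here; it simply cites Ville's 1939 work. Your argument is the standard optional-stopping proof of exactly this finite-horizon form, which is the form the paper needs in the proof of \Cref{cor:Freedman-predictable}.
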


\begin{corollary}[Predictable-variance Freedman inequality under one-sided Bernstein condition]
  \label{cor:Freedman-predictable}
  Let $(X_t)_{t\in\Nat_0}$ be a supermartingale associated with the natural
  filtration $(\calF_t)_{t\in\Nat_0}$.  Suppose that, for every $t\ge1$, the
  difference $X_t-X_{t-1}$ conditioned on $\calF_{t-1}$ satisfies a one-sided
  $(\bounded_t,\variance_t)$-Bernstein condition, where
  $\bounded_t,\variance_t$ are $\calF_{t-1}$-measurable,
  $\bounded_t\le\bounded$, and
  \[
    \sum_{t=1}^T \variance_t\le v
  \]
  almost surely.  Then, for any $h>0$,
  \[
    \Pr\!\left[\exists t\le T:\ X_t-X_0\ge h\right]
    \le
    \exp\!\left(-\frac{h^2/2}{v+h\bounded/3}\right).
  \]
\end{corollary}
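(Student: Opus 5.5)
This corollary follows from \Cref{thm:Freedman} by the usual exponential-supermartingale argument, now with predictable (rather than constant) Bernstein parameters. I recall the one-sided $(\bounded_t,\variance_t)$-Bernstein condition in the form used by Shimizu and Shiraga: for every $\lambda\in[0,3/\bounded_t)$,
\[
\E\!\left[e^{\lambda(X_t-X_{t-1})}\mid\calF_{t-1}\right]\le\exp\!\left(\frac{\lambda^2\variance_t/2}{1-\lambda\bounded_t/3}\right).
\]

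Fix $\lambda\in[0,3/\bounded)$. Since $\bounded_t\le\bounded$, the same $\lambda$ is admissible at every step, and the right-hand side above is monotone in $\bounded_t$, so we may replace $\bounded_t$ by $\bounded$. Define
\[
L_t=\exp\!\left(\lambda(X_t-X_0)-\frac{\lambda^2/2}{1-\lambda\bounded/3}\sum_{s=1}^t\variance_s\right).
\]
Because $\variance_s$ is $\calF_{s-1}$-measurable, the compensator term at step $t$ is predictable. Conditioning on $\calF_{t-1}$ and applying the Bernstein bound gives $\E[L_t\mid\calF_{t-1}]\le L_{t-1}$. Hence $L$ is a nonnegative supermartingale with $L_0=1$.

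On the event $\{\exists t\le T: X_t-X_0\ge h\}$, at such a time $t$ the almost-sure bound $\sum_{s\le t}\variance_s\le\sum_{s\le T}\variance_s\le v$ yields
\[
L_t\ge\exp\!\left(\lambda h-\frac{\lambda^2 v/2}{1-\lambda\bounded/3}\right).
\]
Applying \Cref{thm:Ville} then gives
\[
\Pr\!\left[\exists t\le T:\ X_t-X_0\ge h\right]\le\exp\!\left(-\lambda h+\frac{\lambda^2 v/2}{1-\lambda\bounded/3}\right).
\]

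It remains to optimise over $\lambda$. The standard choice is $\lambda=h/(v+h\bounded/3)$, which satisfies $\lambda\bounded/3<1$. With this choice,
\[
1-\frac{\lambda\bounded}{3}=\frac{v}{v+h\bounded/3},
\]
so the quadratic term equals $\lambda^2(v+h\bounded/3)/2=\frac{h^2/2}{v+h\bounded/3}$, while $\lambda h=\frac{h^2}{v+h\bounded/3}$. The exponent is therefore $-\frac{h^2/2}{v+h\bounded/3}$, as claimed.

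The main points needing care are the following:
\begin{itemize}
\item The compensator must be predictable, which holds because the $\variance_t$ are $\calF_{t-1}$-measurable.
\item Passing from the time-dependent $\bounded_t$ to the uniform $\bounded$ requires the monotonicity observation above.
\item The supermartingale property of $X$ is already built into the one-sided Bernstein condition, so it needs no separate treatment.
\end{itemize}
If the paper's Bernstein condition is instead phrased as a variance bound $\E[(X_t-X_{t-1})^2\mid\calF_{t-1}]\le\variance_t$ together with $X_t-X_{t-1}\le\bounded$, I would apply \Cref{thm:Freedman} directly. The hypothesis $V_T\le v$ then holds almost surely, and the resulting bound coincides with the one above.
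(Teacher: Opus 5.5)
Your proposal is correct and follows the paper's proof essentially step for step: use $\bounded_t\le\bounded$ so that a single $\lambda<3/\bounded$ is admissible at every step, form the exponential supermartingale $L_t$ with predictable compensator $\frac{\lambda^2/2}{1-\lambda\bounded/3}\sum_{s\le t}\variance_s$, apply Ville's inequality, and choose $\lambda=h/(v+h\bounded/3)$. Your explicit verification that this choice gives the exponent $-\frac{h^2/2}{v+h\bounded/3}$ simply fills in arithmetic that the paper leaves implicit.
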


\begin{proof}
Fix \(0<\lambda<3/\bounded\) and put
\[
  c_\lambda=\frac{\lambda^2/2}{1-\lambda\bounded/3}.
\]
By \(\bounded_t\le \bounded\) and the one-sided Bernstein condition,
\[
  \E\!\left[
    \exp(\lambda(X_t-X_{t-1}))\mid\calF_{t-1}
  \right]
  \le
  \exp(c_\lambda \variance_t).
\]
Hence
\[
  L_t=
  \exp\!\left(
    \lambda(X_t-X_0)-c_\lambda\sum_{s=1}^t\variance_s
  \right)
\]
is a nonnegative supermartingale. By \Cref{thm:Ville},
\[
  \Pr\!\left[\exists t\le T:\ X_t-X_0\ge h\right]
  \le
  \exp(-\lambda h+c_\lambda v),
\]
using \(\sum_{s=1}^T\variance_s\le v\) almost surely. Choosing
\[
  \lambda=\frac{h}{v+h\bounded/3}
\]
gives
\[
  \exp\!\left(-\frac{h^2/2}{v+h\bounded/3}\right).
\]
\end{proof}

\begin{lemma}[Lemma 4.1 in \cite{Shimizu2025}]\label{lem:basic inequality}
  Consider the quantities defined in \Cref{sec:model} for 3-Majority or 2-Choices.
  Then, we have the following for any $t\geq 1$:
\begin{enumerate}
            \renewcommand{\labelenumi}{(\roman{enumi})}
\item \label{item:alpha} 
For any opinion $i \in [k]$, we have
  \begin{align*} 
     &\E_{t-1} [\alpha_t(i)] = 
      \alpha_{t-1}(i) \qty( 1 + \alpha_{t-1}(i) - \alphanorm_{t-1} ), \\ 
    &\Var_{t-1}[\alpha_t(i)] \le \begin{cases}
      \frac{\alpha_{t-1}(i)}{n}	& \text{for 3-Majority},\\
      \frac{\alpha_{t-1}(i)(\alpha_{t-1}(i) + \alphanorm_{t-1})}{n}	& \text{for 2-Choices}.
    \end{cases} 
  \end{align*}
\item \label{item:delta} 
For any two distinct opinions $i,j\in [k]$, we have
  \begin{align*} 
     &\E_{t-1} [\delta_t(i,j)] =
      \delta_{t-1}(i,j) \qty( 1 + \alpha_{t-1}(i) + \alpha_{t-1}(j) - \alphanorm_{t-1} ), \\
    &\Var_{t-1}[\delta_t(i,j)] \le \begin{cases}
      \frac{2}{n}(\alpha_{t-1}(i) + \alpha_{t-1}(j))	& \text{for 3-Majority},\\[10pt]
      \frac{1}{n}(\alpha_{t-1}(i)+\alpha_{t-1}(j))(\alpha_{t-1}(i)+\alpha_{t-1}(j)+\alphanorm_{t-1})	& \text{for 2-Choices}.
    \end{cases} 
  \end{align*}
\end{enumerate}
\end{lemma}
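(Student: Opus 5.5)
Conditional on $\calF_{t-1}$, the $n$ vertices update \emph{independently}, since each draws its own fresh samples and tie-breaking coins. Write $a=\alpha_{t-1}(i)$, $b=\alpha_{t-1}(j)$ and $\gamma=\alphanorm_{t-1}$. For each vertex $v$ let $X_v=\indicator\{v \text{ holds } i \text{ at round } t\}$ and $Y_v=\indicator\{v\text{ holds } j\text{ at round } t\}$. Then
\[
\alpha_t(i)=\tfrac1n\sum_v X_v, \qquad \delta_t(i,j)=\tfrac1n\sum_v (X_v-Y_v),
\]
and each is a normalised sum of conditionally independent bounded terms. So I would prove both items by computing the per-vertex probabilities exactly and then using
\[
\Var_{t-1}\Bigl[\tfrac1n\textstyle\sum_v Z_v\Bigr]=\tfrac1{n^2}\textstyle\sum_v \Var_{t-1}[Z_v].
\]
Item (ii)'s expectation then follows from item (i)'s by linearity, because $a(1+a-\gamma)-b(1+b-\gamma)=(a-b)(1+a+b-\gamma)$.

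\textbf{3-Majority.} Every vertex adopts $i$ with the same probability $p_i$. This is the probability of at least two $i$-samples, $3a^2-2a^3$, plus one third of the probability of three distinct samples one of which is $i$. The latter is $3a\bigl((1-a)^2-(\gamma-a^2)\bigr)$. Summing gives $p_i=a+a^2-a\gamma$, which is the claimed mean.

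For the variance we have $\Var_{t-1}[\alpha_t(i)]=p_i(1-p_i)/n$, so it suffices to show $p_i(1-p_i)\le a$. If $p_i\le a$ this is immediate. Otherwise, since $\gamma\ge a^2$, we get $p_i\le a(1+a)$ and $1-p_i\le 1-a$, so $p_i(1-p_i)\le a(1-a^2)\le a$.

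For $\delta$, the events $\{X_v=1\}$ and $\{Y_v=1\}$ are disjoint. Hence $\Var[X_v-Y_v]\le \E[(X_v-Y_v)^2]=p_i+p_j$. Using $p_i\le 2a$ (because $a-\gamma\le 1$), this gives the bound $2(a+b)/n$.

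\textbf{2-Choices.} The probability now depends on the vertex's current opinion. A vertex holding $i$ keeps $i$ unless it draws two equal non-$i$ opinions, which has probability $\gamma-a^2$. A vertex not holding $i$ adopts $i$ with probability $a^2$. Averaging over vertices gives
\[
a(1-\gamma+a^2)+(1-a)a^2=a+a^2-a\gamma,
\]
which is the claimed mean. For the variance, bound each Bernoulli variance by the probability of changing status: at most $\gamma-a^2$ for the $an$ vertices holding $i$, and at most $a^2$ for the rest. Summing gives at most $a\gamma+a^2$ over $n$, as claimed.

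For $\delta$, split the vertices into those holding $i$ or $j$ and those holding any other opinion.
\begin{itemize}
\item For an $i$-vertex, the centred term satisfies $\Var[X_v-Y_v]\le \E[(X_v-Y_v-1)^2]=(\gamma-a^2-b^2)+4b^2$. The symmetric bound holds for $j$-vertices.
\item For any other vertex, $\Var[X_v-Y_v]\le a^2+b^2$.
\end{itemize}
Summing with weights $a$, $b$ and $1-a-b$, the total is at most
\[
(a+b)\gamma+3ab(a+b)+(1-a-b)(a^2+b^2).
\]
It remains to show this is $\le (a+b)(a+b+\gamma)$.

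\textbf{Main obstacle.} The only non-mechanical step is this last inequality. Writing $s=a+b\le1$, it reduces to
\[
3abs\le 2ab+s(a^2+b^2).
\]
This holds because $a^2+b^2\ge 2ab$ and $s\le 2$.

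For the 3-Majority expectation, the only care needed is counting the all-distinct tie case correctly; the $-2a^3$ and $+2a^3$ terms cancel. The result is the identity $\E_{t-1}[\alpha_t(i)]=\alpha_{t-1}(i)(1+\alpha_{t-1}(i)-\alphanorm_{t-1})$, the same formula for both dynamics.
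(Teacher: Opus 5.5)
Your proof is correct. The following steps all check out:
\begin{itemize}
\item the conditional independence of the $n$ vertex updates;
\item the per-vertex probabilities: $p_i=a+a^2-a\gamma$ for 3-Majority, and for 2-Choices retention probability $1-(\gamma-a^2)$ with entry probability $a^2$;
\item the case split giving $p_i(1-p_i)\le a$;
\item the final reduction of the 2-Choices $\delta$-variance bound to $3abs\le 2ab+s(a^2+b^2)$.
\end{itemize}

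The paper itself gives no proof of this statement. It imports it verbatim as Lemma 4.1 of Shimizu and Shiraga and uses it only as a black box, so there is no in-paper argument to compare against. Your direct computation is a self-contained verification of exactly the bounds the paper relies on. It also exposes some slack: in 3-Majority one actually has $p_i\le \tfrac54 a$, since $a-\gamma\le a-a^2\le \tfrac14$. In the 2-Choices $\delta$ bound, the discarded $-(a^3+b^3)$ term is not needed.
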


\begin{lemma}[Lemma 4.2 in \cite{Shimizu2025}]
\label{lem:Bernstein condition for sync processes}
Consider the quantities defined in \Cref{sec:model} for 3-Majority or 2-Choices.
Then, we have the following for any $t\geq 1$:
\begin{enumerate}
            \renewcommand{\labelenumi}{(\roman{enumi})}
    \item \label{item:BC for alpha}
    For any opinion $i\in [k]$, both
    $\alpha_{t}(i)-\E_{t-1}\sbra*{\alpha_t(i)}$ and
    $\E_{t-1}\sbra*{\alpha_t(i)}-\alpha_t(i)$ conditioned on round $t-1$
    satisfy the one-sided $\qty(\frac{1}{n},\variance)$-Bernstein condition,
    where
    \begin{align*} 
       \variance = \begin{cases}
        \frac{\alpha_{t-1}(i)}{n}	& \text{for 3-Majority},\\
        \frac{\alpha_{t-1}(i)(\alpha_{t-1}(i)+\alphanorm_{t-1})}{n} & \text{for 2-Choices}.
       \end{cases}
    \end{align*}
    \item \label{item:BC for delta}
     For any two distinct opinions $i,j\in [k]$, both
     $\delta_{t}(i,j)-\E_{t-1}\sbra*{\delta_t(i,j)}$ and
     $\E_{t-1}\sbra*{\delta_t(i,j)}-\delta_{t}(i,j)$, conditioned on round
     $t-1$, satisfy the one-sided $ \qty(\frac{2}{n}, \variance) $-Bernstein
     condition, 
     where
    \begin{align*} 
       \variance = \begin{cases}
          \frac{2}{n}(\alpha_{t-1}(i)+\alpha_{t-1}(j))	& \text{for 3-Majority},\\[10pt]
          \frac{1}{n}\qty(\alpha_{t-1}(i) + \alpha_{t-1}(j))\qty(\alpha_{t-1}(i)+\alpha_{t-1}(j)+\alphanorm_{t-1}) & \text{for 2-Choices}.
       \end{cases} 
    \end{align*}
\end{enumerate}
\end{lemma}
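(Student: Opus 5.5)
The plan is to reduce both items to the classical Bernstein moment-generating-function bound for sums of independent, bounded, centred random variables. The key structural fact is conditional independence. Conditioned on $\calF_{t-1}$, that is, on the whole configuration at round $t-1$, every vertex $v\in V$ draws its samples (and, in 3-Majority, its tie-breaking variable) independently of all other vertices. Its new opinion therefore depends only on its own fresh randomness and on its current opinion.

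I take the one-sided $(\bounded,\variance)$-Bernstein condition in the form used in \Cref{cor:Freedman-predictable}: for every $0<\lambda<3/\bounded$,
\[
  \E\!\left[e^{\lambda X}\mid\calF_{t-1}\right]\le \exp\!\left(\frac{\lambda^2\variance/2}{1-\lambda\bounded/3}\right).
\]

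\textbf{Item (i).} Write $\alpha_t(i)=\frac1n\sum_{v\in V}\indicator_{\{v\text{ holds }i\text{ at }t\}}$. Put $p_v=\Prob_{t-1}[v\text{ holds }i\text{ at }t]$ and $Y_v=\frac1n(\indicator_{\{v\text{ holds }i\}}-p_v)$. Then $\alpha_t(i)-\E_{t-1}[\alpha_t(i)]=\sum_v Y_v$, and the same holds with every $Y_v$ replaced by $-Y_v$ for the opposite tail. The variables $Y_v$ are conditionally independent, centred, and satisfy $|Y_v|\le 1/n$.

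For a single centred variable $Y$ with $Y\le b$, the Taylor expansion $e^{\lambda Y}=1+\lambda Y+\sum_{k\ge2}\lambda^kY^k/k!$ together with $\E[|Y|^k]\le b^{k-2}\E[Y^2]$ and $k!\ge 2\cdot 3^{k-2}$ gives
\[
  \E[e^{\lambda Y}]\le 1+\frac{\lambda^2\E[Y^2]}{2}\sum_{k\ge2}\Bigl(\frac{\lambda b}{3}\Bigr)^{k-2}\le\exp\!\left(\frac{\lambda^2\E[Y^2]/2}{1-\lambda b/3}\right).
\]
Multiplying these bounds over $v$, which is allowed by conditional independence, yields the Bernstein condition with $\bounded=1/n$ and $\variance=\sum_v\E_{t-1}[Y_v^2]=\Var_{t-1}[\alpha_t(i)]$. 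The variance is exactly this sum because the summands are conditionally independent. \Cref{item:alpha} of \Cref{lem:basic inequality} then bounds it by $\alpha_{t-1}(i)/n$ for 3-Majority and by $\alpha_{t-1}(i)(\alpha_{t-1}(i)+\alphanorm_{t-1})/n$ for 2-Choices.

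\textbf{Item (ii).} The argument is identical, with per-vertex summand $Z_v=\frac1n\bigl(\indicator_{\{v\text{ holds }i\}}-\indicator_{\{v\text{ holds }j\}}\bigr)-\E_{t-1}[\,\cdot\,]$. The uncentred term lies in $\{-1/n,0,1/n\}$, so $|Z_v|\le 2/n$; this gives $\bounded=2/n$ for both signs. The variance parameter is again $\sum_v\Var_{t-1}[Z_v]=\Var_{t-1}[\delta_t(i,j)]$, which is bounded by \Cref{item:delta} of \Cref{lem:basic inequality} with exactly the two displayed expressions.

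\textbf{Main point of care.} The only step needing real attention is the justification that the variance proxies equal the sums of per-vertex variances. This rests on independence across vertices given $\calF_{t-1}$. It holds even in 2-Choices, where $p_v$ depends on $v$'s current opinion, and even for self-loop samples, because each vertex samples with replacement from the fixed previous configuration. Once that is in place, the numerical bounds on the variances are imported verbatim from \Cref{lem:basic inequality}, and no further computation is required.
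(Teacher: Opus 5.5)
Your proof is correct. The paper gives no proof of this lemma: it is imported verbatim from \cite{Shimizu2025}, so there is no in-paper argument to compare against. Your argument is the standard derivation behind such a statement: conditional independence of the per-vertex updates given $\mathcal{F}_{t-1}$, the classical Bernstein moment-generating-function bound for each bounded centred summand, multiplication over vertices, and then the variance bounds of \Cref{lem:basic inequality}, using that the Bernstein bound is monotone in the variance parameter.

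One wording fix: your single-variable estimate $\mathbb{E}[|Y|^k]\le b^{k-2}\mathbb{E}[Y^2]$ requires $|Y|\le b$, not merely $Y\le b$. This does hold in both items, since $|Y_v|\le 1/n$ and $|Z_v|\le 2/n$, so nothing breaks.
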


\begin{lemma}[Item 1 of Lemma 4.4 in \cite{Shimizu2025}]
\label{lem:drift analysis for basic}
Fix an opinion $i\in[k]$.  Let $\calphaup=1/10$ and define
\[
  \tau_i^\uparrow
  =\inf\{t\ge0:\alpha_t(i)\ge(1+\calphaup)\alpha_0(i)\}.
\]
For any constant $\varepsilon\in(0,1)$, let
\[
  c_\uparrow=\frac{(1-\varepsilon)\calphaup}{(1+\calphaup)^2}.
\]
Then,
\[
  \Pr\!\left[
    \tau_i^\uparrow\le
    \frac{c_\uparrow}{\alpha_0(i)}
  \right]
  \le
  \begin{cases}
    \exp(-\Omega(n\alpha_0(i)^2)), & \text{for 3-Majority},\\
    \exp(-\Omega(n\alpha_0(i))), & \text{for 2-Choices}.
  \end{cases}
\]
\end{lemma}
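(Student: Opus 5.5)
The plan is a drift-plus-concentration argument on the single coordinate $\alpha_t(i)$, stopped at $\tau_i^\uparrow$. Write $a=\alpha_0(i)$ and $T=\lfloor c_\uparrow/a\rfloor$, and let $\sigma=\min\{\tau_i^\uparrow,T\}$. For $u<\sigma$ we have $\alpha_u(i)<(1+\calphaup)a$. By \Cref{item:alpha} of \Cref{lem:basic inequality}, and using $\gamma_u\ge0$, the drift satisfies
\[
  \E_u[\alpha_{u+1}(i)]-\alpha_u(i)=\alpha_u(i)\bigl(\alpha_u(i)-\gamma_u\bigr)\le(1+\calphaup)^2a^2 .
\]
Summed over at most $T$ rounds, the total drift is at most $(1+\calphaup)^2a^2\cdot c_\uparrow/a=(1-\varepsilon)\calphaup a$; this is exactly why $c_\uparrow$ was chosen as it was. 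Consequently, if $\tau_i^\uparrow\le T$, then the stopped martingale $M_q=\sum_{u<q\wedge\sigma}\bigl(\alpha_{u+1}(i)-\E_u[\alpha_{u+1}(i)]\bigr)$ must reach at least $h=\varepsilon\calphaup a$ for some $q\le T$. It therefore suffices to bound the upper tail of $M$.

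\textbf{3-Majority.} This case is direct. By \Cref{item:BC for alpha} of \Cref{lem:Bernstein condition for sync processes}, the increments of $M$ satisfy a one-sided $(1/n,\alpha_u(i)/n)$-Bernstein condition. Before $\sigma$ the variance parameter is at most $(1+\calphaup)a/n$, so the accumulated variance over $T$ rounds is at most $T(1+\calphaup)a/n=O(1/n)$. Applying \Cref{cor:Freedman} (or \Cref{cor:Freedman-predictable}) with $h=\Theta(a)$ gives the bound
\[
  \exp\!\left(-\Omega\!\left(\frac{a^2}{1/n+a/n}\right)\right)=\exp(-\Omega(na^2)),
\]
using $a\le1$.

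\textbf{2-Choices.} This is the main obstacle. The variance proxy is $\alpha_u(i)(\alpha_u(i)+\gamma_u)/n$, and $\gamma_u$ is not controlled by $\alpha_u(i)$: it can be of order $1$ while $a$ is small. Bounding it crudely gives total variance of order $1/n$, and hence only $\exp(-\Omega(na^2))$ rather than the required $\exp(-\Omega(na))$. The remedy is that the same $\gamma_u$ appears with a negative sign in the drift, $-\alpha_u(i)\gamma_u$, and this term can absorb the variance it creates. Concretely, I would not separate drift from noise. Instead I would work directly with the exponential process
\[
  L_t=\exp\!\Bigl(\lambda\bigl(\alpha_{t\wedge\sigma}(i)-a\bigr)-\lambda\,\kappa\, a^2\,(t\wedge\sigma)\Bigr),\qquad \lambda=\theta n,
\]
for a small constant $\theta>0$ and a suitable constant $\kappa$ slightly above $(1+\calphaup)^2$. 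By the Bernstein condition, the one-step moment generating function is at most
\[
  \exp\!\Bigl(\lambda\,\alpha_u(i)\bigl(\alpha_u(i)-\gamma_u\bigr)+c_\lambda\,\alpha_u(i)\bigl(\alpha_u(i)+\gamma_u\bigr)/n\Bigr),\qquad c_\lambda=O(\lambda^2).
\]
Since $c_\lambda/n=O(\theta^2 n)$, which is small compared with $\lambda=\theta n$, the coefficient of $\alpha_u(i)\gamma_u$ is negative, so the $\gamma_u$ terms cancel. The remaining exponent is at most $\lambda(1+O(\theta))(1+\calphaup)^2a^2$, which makes $L_t$ a nonnegative supermartingale once $\kappa$ is fixed. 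On $\{\tau_i^\uparrow\le T\}$ the exponent is at least $\lambda\bigl(\calphaup a-\kappa a^2T\bigr)\ge\lambda\,\varepsilon'\calphaup a$ for some constant $\varepsilon'>0$, after taking $\theta$ small enough relative to $\varepsilon$. \Cref{thm:Ville} then bounds the probability by $\exp(-\theta\varepsilon'\calphaup\, na)=\exp(-\Omega(na))$.

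The delicate point is the simultaneous choice of $\theta$ and $\kappa$. The factor $(1+O(\theta))$ inflates the drift budget, and this inflation must stay inside the slack $\varepsilon$ that was built into $c_\uparrow$. That is exactly what fixes $\theta=\Theta(\varepsilon)$.
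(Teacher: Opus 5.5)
The paper contains no proof of this statement. It is quoted from \cite{Shimizu2025} and used as a black box in the proof of \Cref{thm:main-lower}, so there is no in-text argument to match. Judged on its own, your proposal is correct and uses only tools the paper states: \Cref{lem:basic inequality}, \Cref{lem:Bernstein condition for sync processes}, \Cref{cor:Freedman} and \Cref{thm:Ville}.

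The reduction is sound: the drift is at most $(1+\calphaup)^2a^2$ before $\sigma$, so the choice of $c_\uparrow$ leaves a deviation $h=\varepsilon\calphaup a$ for the noise. The reduction and the 3-Majority case follow the same drift-plus-Freedman template the paper uses in \Cref{lem:quick-balance}.

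For 2-Choices you correctly see that the crude bound $\gamma_u\le 1$ only gives $\exp(-\Omega(na^2))$. Your exponential process is essentially the proof of \Cref{cor:Freedman-predictable} with the drift left inside the exponent, and the computation checks out. With $\lambda=\theta n$ and $c_\lambda=\frac{\lambda^2/2}{1-\theta/3}$, the one-step log-MGF bound is
\[
  \theta n\,\alpha_u(i)\bigl[(1+\theta')\alpha_u(i)-(1-\theta')\gamma_u\bigr],
  \qquad \theta'=\frac{\theta}{2(1-\theta/3)} .
\]
For $\theta\le 1$ the $\gamma_u$ term is nonpositive, and $\kappa=(1+\theta')(1+\calphaup)^2$ makes $L_t$ a supermartingale. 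Any $\theta'<\varepsilon/(1-\varepsilon)$ leaves the needed slack $\calphaup a\bigl(1-(1+\theta')(1-\varepsilon)\bigr)>0$.

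There is also a route for 2-Choices that avoids the cancellation altogether, at the price of going beneath the black-box lemmas to the transition law. Vertices currently holding $i$ contribute at most $n\alpha_u(i)$ to $n\alpha_{u+1}(i)$. Each of the $n(1-\alpha_u(i))$ other vertices independently adopts $i$ with probability $\alpha_u(i)^2$. Hence $n\alpha_{u+1}(i)\le n\alpha_u(i)+Y_u$, where $Y_u\sim\mathrm{Bin}(n(1-\alpha_u(i)),\alpha_u(i)^2)$ given $\calF_u$. The fluctuation of $Y_u/n$ has variance proxy at most $\alpha_u(i)^2/n\le(1+\calphaup)^2a^2/n$ before $\sigma$. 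The total over $T\le c_\uparrow/a$ rounds is therefore $O(a/n)$, and plain \Cref{cor:Freedman} then gives $\exp(-\Omega(na))$.

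Your version has two advantages. It relies on nothing about the dynamics beyond \Cref{lem:basic inequality} and \Cref{lem:Bernstein condition for sync processes}. It also shows directly why the $\gamma$-dependence of the 2-Choices variance is harmless: it always comes paired with a negative drift $-\alpha_u(i)\gamma_u$ of the same order.
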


\bibliography{references}

\end{document}